\patchcmd{\maketitle}{\@copyrightspace}{}{}{}
\newtheorem{theorem}{Theorem}
\newtheorem{lemma}{Lemma}
\newtheorem{corollary}{Corollary}
\renewcommand{\P}{\mathbb{P}}
\newcommand{\E}{\mathbb{E}}
\newcommand{\ind}[1]{\textup{\textbf{1}}\{#1\}}
\newcommand{\CA}{\textup{CA}}
\newcommand{\RW}{\textup{RW}}
\newcommand{\figwidth}{0.7\columnwidth}
\begin{document}

\title{Duration and Intensity of Cumulative\\
Advantage Competitions}

\numberofauthors{1} 

\author{
\alignauthor{Bo Jiang\textsuperscript{\textasteriskcentered}, Liyuan Sun\textsuperscript{\textdagger}, Daniel R. Figueiredo$^\ddagger$, Bruno Ribeiro\textsuperscript{\textsection}, Don Towsley\textsuperscript{\textasteriskcentered}}\\
\vspace{0.5em}
\affaddr{\textsuperscript{\textasteriskcentered}University of Massachusetts, Amherst, MA, USA}\\
\vspace{0.5em}
\affaddr{\textsuperscript{\textdagger}Tsinghua University, Beijing, China}\\
\vspace{0.5em}
\affaddr{$^\ddagger$Federal University of Rio de Janeiro, Brazil}\\
\vspace{0.5em}
\affaddr{\textsuperscript{\textsection}Carnegie Mellon University, Pittsburgh, PA, USA}
}

\maketitle
\begin{abstract}
The role of skill (fitness) and luck (randomness) as driving forces on the dynamics of resource accumulation 
in a myriad of systems have long puzzled scientists. Fueled by undisputed inequalities that emerge 
from actual competitions, there is a pressing need for better understanding the effects of skill and luck in 
resource accumulation. When such competitions are driven by externalities such as cumulative advantage (CA), the rich-get-richer effect, 
little is known with respect to fundamental properties such as their duration and intensity. In this work we 
provide a mathematical understanding of how CA exacerbates the role of luck in detriment of skill in simple 
and well-studied competition models. 
We show, for instance, that if two agents are competing for resources that arrive sequentially at each time unit, 
an early stroke of luck can place the less skilled in the lead for an extremely long period of time, a phenomenon we call ``struggle of the fittest''. In the absence of CA, the more skilled quickly prevails despite any early stroke of luck that the less skilled may have. 
We prove that duration of a simple skill and luck competition model exhibit power 
law tails when CA is present, regardless of skill difference, which is in sharp contrast to exponential tails when 
CA is absent. Our findings have important implications to competitions not only in complex social systems but 
also in contexts that leverage such models. 
\end{abstract}

\category{G.3}{Probability and Statistics}{Stochastic processes, Distribution functions}

\terms{Theory}

\keywords{Competition, Cumulative advantage, P\'olya's urn, Duration, Intensity} 

\section{Introduction}
\label{sec:intro}


Resource (wealth) accumulation in a variety of complex
systems lead to remarkable inequalities in resource distribution.
The connectivity skewness of autonomous system~\cite{Mahanti13}, webpage
and hashtag popularity~\cite{Barab99,Broder2000,Kwak2010}, and the number
of friends and followers in online social
networks~\cite{Kumar2010,Kwak2010,Ugander2011} have profound implications on the performance of information systems, such as caching~\cite{barford1999} and searching~\cite{adamic2001}.
In these complex systems, skill (fitness) and luck (randomness) are believed to be two fundamental ingredients that drive resource  
accumulation dynamics in a variety of social and complex systems. Apart from them, externalities are also believed 
to play a fundamental role in how resources accumulate.
In this regard, cumulative advantage (CA), where accumulated resources promote gathering even 
more resources,\footnote{This phenomenon appears in the literature under many variants such as Price's
cumulative advantage model~\cite{SollaPrice76}, preferential attachment \cite{Barab12,Barab99,Bianc01}, ``the rich get richer'', Matthew effect \cite{diprete2006cumulative,Merton68,Perc2014}, and path-dependent increasing returns \cite{Arthur}.}
is a simple and widely-applied framework that captures the essence of network externalities, 
while random walk (RW) serves as the hallmark framework without externalities. 

These two parsimonious models have a long tradition in providing key 
insights into the process of resource accumulation, with CA, for instance, presented 
as a mechanism to help explain the presence of power-laws observed in empirical data~\cite{Barab99,SollaPrice76,Perc2014,Rosen1981}.
It is also well-known that in both frameworks the most skilled is bound to accumulate more resources. However, the assurance that the most skilled eventually prevails provides little practical insight. The question is, how long will it take for the most skilled to prevail?



We approach this problem by considering classical, simple and well-studied theoretical models for competitions 
based on skill and luck that are either coupled with or free of cumulative advantage. We focus on competitions between two agents and study two fundamental 
aspects of competitions: {\em duration} -- the time required for the most skilled to overtake its competitor and 
forever enjoy undisputed leadership; {\em intensity} -- the number of times competitors tie for the leadership. In 
this direction, we make the following main contributions.

\begin{itemize}
\item In the case where the two competitors have equal fitness, we obtain the asymptotic tail distributions for both duration and intensity of CA competitions. We demonstrate that they are power laws with respective tail exponents $-1/2$ and $-1$, which are independent of the initial wealth of the competitors.

\item  In the case where the two competitors have unequal fitness, we derive asymptotic lower and upper bounds for the tail distribution of duration of CA competitions, and an upper bound for the tail distribution of their intensity. These bounds show that duration is heavy tailed while intensity is exponential tailed in the presence of CA. In particular, duration is heavier tailed while intensity is lighter tailed than corresponding RW competitions.

\item We observe that a slight difference in fitness results in a extremely heavy tail for duration of CA competitions. Thus, a slightly more skilled individual might have to hang on to the competition for an extremely long period of time before taking the ultimate lead, a phenomenon we call the ``struggle of the fittest''. 
\end{itemize}

Despite 90 years since the basic CA model was first proposed~\cite{Polya23}, 
known as P\'olya's urn model, our work is, to the best of our knowledge, the first to characterize 
the duration and intensity distributions of CA competitions with skill. We believe our results have 
profound implications to our understanding of competitions, beyond its importance to the performance of 
systems that leverage resource distribution.

The rest of the paper is organized as follows. Section \ref{sec:related} briefly discusses the related work. Section \ref{sec:model} introduces the CA competition model. Section \ref{sec:results} presents the theoretical results, illustrated and supplemented by simulations. Section \ref{sec:proofs} provides the proofs for the theoretical results in Section \ref{sec:results}. Section \ref{sec:conclusion} concludes the paper.


\section{Related work}
\label{sec:related}

Resource accumulation is an ubiquitous phenomenon that naturally arises in a variety of social and complex
systems. The problem is usually framed as a competition among agents for resources that are abundant, and
has been studied in different contexts across various disciplines ranging from proteins binding within
a cell~\cite{PRL_Eisenberg03,Kampen:1981vs} to views of online social media~\cite{borghol2011,figueiredo2011} and citations among scholarly papers~\cite{SollaPrice76,Wang:2013to}.

Models proposed for resource accumulation competitions are generally driven by skill (fitness)
and luck (randomness) as well as externalities, such as network effects. The P{\'o}lya's urn model~\cite{Polya23,Mahm08} is widely used to capture these effects.
Most previous work on P{\'o}lya's urn model and its generalizations focuses on the share of resources
gathered by each agent, also known as the agent's {\em market share}, proving convergence
and limiting results of the market share distribution \cite{Mahm08,PT_Janson05,Peman07,JCPC_Oliveira08}.

However, two fundamental metrics associated with competitions, {\em duration} - how long it takes for
the undisputed winner to emerge, and {\em intensity} - how many times the competitors tie for the
leadership, have largely been neglected in the literature. Previous results establish that the most skilled agent eventually wins~\cite{Mahm08},
and that average intensity up to time $t$ is approximately $(\log t)^\alpha$, where $\alpha$ depends
on the relative skill of the competitors~\cite{JSM_GL10,JSM_GGL10}.
To the best of our knowledge, no previous work has provided rigorous characterizations for the distributions of duration
and intensity of competitions in P{\'o}lya's urn models. Our work partially fills this gap for the two competitor case and sheds light on some recent approximate results~\cite{JSM_GL10,JSM_GGL10}.


\section{Models}
\label{sec:model}

In this section, we formally introduce competition models for two competing agents and give precise definitions for two fundamental metrics of a competition, i.e.\ its duration and intensity.

\subsection{General Setup and Metrics}
\label{subsec:metrics}

Let $X$ and $Y$ denote the two agents that engage in the competition. Each agent is associated with a positive \emph{fitness} value that reflects its intrinsic competitiveness or skill level. Let $f_X$ and $f_Y$ denote the fitness of $X$ and $Y$, respectively, and $r = f_X/f_Y$ the fitness ratio. Without loss of generality, we assume that $f_X \geq f_Y$ and hence $r \geq 1$. 

The resource that the agents compete for will be generically referred to as wealth, which is measured in discrete units. The competition starts at time $t=0$ with agents $X$ and $Y$ having $x_0$ and $y_0$ units of initial wealth, respectively. We consider a discrete-time process.  At each time step, one unit of wealth is added to the
system and given to either $X$ or $Y$. Denote by $X_t$ and $Y_t$ the respective cumulative wealth of $X$ and $Y$ at time $t$.
The complete history of the competition $\{(X_t,Y_t)\}_{t=0}^\infty$ then forms a discrete-time discrete-space stochastic process. The state space $S$ is the first quadrant of the integral lattice (see \prettyref{fig:sample-path}),  
\[
S = \{(x,y)\in \mathbb{Z}^2:x\geq 1,  y\geq 1\}.
\]
The initial condition is $(X_0,Y_0) = (x_0,y_0)$. How the process evolves over time is defined by specific competition models, of which the CA competition model to be introduced in Section \ref{subsec:CA-model} is an example.

\begin{figure}
\centering
\includegraphics[width=0.5\columnwidth]{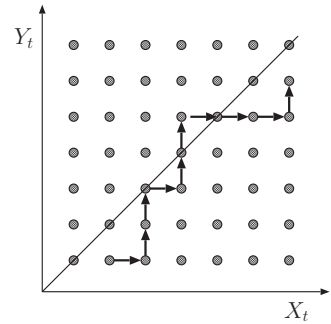}
\caption{State space of competition processes, with an illustration of a sample path with $x_0=2$, $y_0=1$, and` three ties at time $t=3,5,7$.}
\label{fig:sample-path}
\end{figure}

We now make the notions of duration and intensity of competitions more precise by defining them through events of wealth ties. Given a competition process $\{(X_t, Y_t)\}_{t=0}^\infty$, we say that a \emph{tie} occurs at time $t$ if $X_t = Y_t$. \prettyref{fig:sample-path} shows three ties at time $t=3,5,7$. 

The {\em duration} $T$ of a competition is defined
to be the time of the last tie, i.e.,
\[
T = \sup \{t\geq 0: X_t = Y_t\}.
\]
When there is no tie, we follow the standard convention that $T = \sup\emptyset = -\infty$. The competition ends at time $T$ in the sense that one of the agents takes the lead and never lose it again after $T$.

The {\em intensity} $N_t$ of a competition until time $t$ is the number of ties that occur by time $t$, i.e.,
\[
N_t = \sum_{i=0}^t \ind{X_i = Y_i}, 
\]
where $\ind{A}$ is the indicator of event $A$. The \emph{intensity} $N$ of a competition is the total number of ties throughout the competition, 
i.e., $N = \lim_{t\to \infty} N_t$. This measures the intensity of the competition in the sense that it counts the number of potential changes in leadership.
Note that $T<+\infty$ if and only if $N<+\infty$.

\subsection{CA Competition Model}
\label{subsec:CA-model}

In the CA competition model,  the unit of wealth introduced at time $t+1$ is given to $X$ with probability
\[
p_{X,t} = \frac{f_X X_t}{f_X X_t + f_Y Y_t} = \frac{r X_t}{r X_t + Y_t};
\]
otherwise it is given to $Y$.  Note that the transition probability $p_{X,t}$ embodies both fitness and CA effects (externalities). 

More formally, in the CA competition model, the complete history $\{(X_t,Y_t)\}_{t=0}^\infty$ forms a discrete-time Markov chain with stationary transition probabilities. The transition probability $\P[(X_{t+1},Y_{t+1})=(x',y')\mid (X_t,Y_t)=(x,y)]$
is given by
\begin{equation}\label{eq:CA-transition-prob}
Q_{\CA,r}(x,y;x',y') = \begin{dcases*}
\frac{rx}{rx+y}, & if $(x',y') = (x+1,y)$,\\
\frac{y}{rx+y}, & if $(x',y') = (x,y+1)$,\\
0,& otherwise.
\end{dcases*}
\end{equation}
Note that the transition probabilities are spatially inhomogeneous, i.e., they depend on the current state $(x,y)$, which makes the analysis difficult, especially when $r > 1$.


For the purpose of comparison, a RW competition model  
incorporates skill and luck but not the CA effect (no externalities), where the transition probabilities are determined entirely by the fitness ratio $r$.
In particular, the probability that agent $X$ receives the unit of wealth introduced at any time is always given by
\[
p_{X} = \frac{f_X}{f_X + f_Y} = \frac{r }{r + 1}.
\]
Thus the RW competition model is a discrete-time Markov chain with the same state space $S$ as the CA competition model, but with the following spatially homogeneous transition probabilities,
\[
Q_{\RW,r}(x,y;x',y') = \begin{dcases*}
\frac{r}{r+1}, & if $(x',y') = (x+1,y)$,\\
\frac{1}{r+1}, & if $(x',y') = (x,y+1)$,\\
0,& otherwise.
\end{dcases*}
\]
The spatial homogeneity of the transition probabilities leads to a more tractable analysis. In fact, the difference process $\{X_t-Y_t\}$ is a standard biased RW with parameter $r/(r+1)$.  Thus the abundance of known results for 
RW \cite{RW} can be directly translated into results for RW competitions, including duration and intensity as we have defined in Section \ref{subsec:metrics}.

Throughout the rest of the paper, we use CA$_{=}$ and RW$_{=}$ to denote CA and RW competitions with identical fitness ($r=1$), respectively.
We use CA$_{\neq}$ and RW$_{\neq}$ to denote CA and RW competitions with distinct fitnesses ($r>1$), respectively.
Before presenting our results, we point out here some connections between the CA and RW models that are useful in our analysis. In particular, in CA$_{=}$, all paths connecting two given states $(x_0,y_0)$ and $(x,y)$ have the same probability. This is a nice property that CA$_{=}$ shares with RW,  which enables us to leverage existing results on RW in our analysis of CA$_{=}$.
Unfortunately, this property is lost in CA$_{\neq}$, where we resort to the Chapman-Kolmogorov equation for upper and lower bounds on the probabilities of interest. In the limiting case where $X_t$ and $Y_t$ are both large but comparable to each other, the connection to RW is again partially retained, a fact we also exploit in the analysis of CA$_{\neq}$.


%
%

\section{Results} \label{sec:results}

In this section we present our theoretical results for duration and intensity distributions, which 
are also illustrated graphically and supported by extensive numerical simulations. \prettyref{tab:result} provides a summary of our main results along with prior knowledge about RW competitions from the literature. Note that $\P_{\langle\textup{\sc Model}\rangle,r}^{(x_0,y_0)}$ denotes the probability in model $\langle\textup{\sc Model}\rangle\in \{\CA, \RW\}$ with fitness ratio $r$ and initial state $(x_0,y_0)$. 
The following notations have been used in \prettyref{tab:result} and will be used throughout the rest of the paper.
\begin{itemize}
\item $f(x)\sim g(x)$ if and only if $\lim_{x\to\infty} f(x) / g(x) = 1$.
\item $f(x)\lesssim g(x)$ if and only if $\limsup_{x\to\infty} f(x) / g(x) \leq 1$.
\item $f(x)\gtrsim g(x)$ if and only if $\liminf_{x\to\infty} f(x) / g(x) \geq 1$.
\end{itemize}
All proofs are relegated to Section \ref{sec:proofs}.

\begin{table}[t]
\renewcommand{\arraystretch}{1.3} 
\centering
\begin{tabular}{@{}lccl@{}}
\toprule
metric   & $\langle$\textsc{Model}$\rangle$ & $r=1$ & \quad $r > 1$  \\
\midrule
\multirow{3}{*}{\begin{tabular}{@{}l@{}}duration $T:$\\$\P_{\langle\textup{\sc Model}\rangle,r}^{(x_0,y_0)}[T \ge t]$\end{tabular}} & \multirow{2}{*}{CA} & \multirow{2}{*}{$\sim t^{-1/2}$} & $\lesssim t^{-(r-1)x_0}$ \\
&  &  & $\gtrsim t^{-(r-1)(x_0-\frac{1}{r})}$ \\
\cmidrule{2-4}
&  RW & $1$  & $\leq \left[ \frac{4r}{(r+1)^2} \right]^t$ \\
\midrule
\multirow{2}{*}{\begin{tabular}{@{}l@{}}intensity $N:$\\$\P_{\langle\textup{\sc Model}\rangle,r}^{(x_0,y_0)}[N \ge n]$ \end{tabular} }& CA & $\sim n^{-1}$      & $\leq \left(\frac{2}{r+1}\right)^{n-1}$ \vspace{0.02in} \\
\cmidrule{2-4}
 & RW & $1$ & $ \left(\frac{2}{r+1}\right)^{n-1}$ \\
\bottomrule
\end{tabular}
\caption{
Tail distributions for duration and intensity of competitions in both RW and CA models. Multiplicative constants are omitted in all expressions involving $t$ and $n$. 
The RW statistics can be found in most textbooks on the topic, e.g.~\cite[pp.\ 113,116]{RW}.}
\label{tab:result}
\end{table}


\subsection{Competition Duration}
\label{subsec:duration}

As shown in \prettyref{tab:result}, a RW$_{=}$ competition never ends, i.e., $\P_{\RW,1}^{(x_0,y_0)}[T = \infty] = 1$, while the duration of a RW$_{\neq}$ competition exhibits an 
exponential tail with a base that is inversely proportional to the fitness ratio $r$, which means that RW$_{\neq}$ competitions are 
generally very short. The story for CA competitions is drastically different. The introduction of CA 
guarantees that a competition always ends, i.e. $\P_{\CA,r}^{(x_0,y_0)}[T<\infty] = 1$, even when the two agents are equally fit, which is in sharp contrast with endless
RW$_{=}$ competitions. On the other hand, CA fundamentally {\em increases} duration of competitions between unequally fit agents, which always has a power-law distribution,
in contrast with a sub-exponential distribution for RW$_{\neq}$. Thus, cumulative advantage does \emph{not} always make competitions shorter as one might expect.

\subsubsection{Equal Fitness Case: CA$_{=}$}
The following theorem shows that the duration $T$ for CA$_{=}$ is heavy-tailed with an asymptotic
power-law distribution.
\begin{theorem}\label{thm:duration-equal}
The duration of a CA$_{=}$ competition has the following asymptotic tail distribution, 
\begin{equation}\label{eq:duration-equal}
\P_{\CA,1}^{(x_0,y_0)}[T\geq t] \sim \frac{1}{2^{x_0 + y_0 - 5/2} \sqrt{\pi}B(x_0,y_0)} t^{-1/2},
\end{equation}
where $B(x,y) = \int_0^1 s^{x-1} (1-s)^{y-1}ds$ is the beta function.
\end{theorem}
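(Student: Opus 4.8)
The plan is to reduce CA$_{=}$ to the classical P\'olya urn and use its de Finetti representation: started from any state $(a,b)$, the process has a limiting wealth fraction $W_{a,b}\sim\text{Beta}(a,b)$ and, conditionally on $W_{a,b}=w$, the successive wealth increments are i.i.d.\ Bernoulli$(w)$; in particular the difference $D_u=X_u-Y_u$ is then a $\pm1$ random walk started at $a-b$ with up-probability $w$. Such a walk returns to $0$ infinitely often only when $w=\tfrac12$, and $\text{Beta}(x_0,y_0)$ is a continuous law, so $\P[T=\infty]=\P[W_{x_0,y_0}=\tfrac12]=0$; this lets us split on the time of the \emph{last} tie. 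Writing $A_s=\{X_s=Y_s\}$ (which forces $s\equiv x_0+y_0\pmod2$ and $X_s=Y_s=m$ with $m=m(s):=(x_0+y_0+s)/2$) and restarting the chain at the tie state $(m,m)$ by the Markov property, we obtain
\[
\P_{\CA,1}^{(x_0,y_0)}[T\ge t]=\sum_{s\ge t,\ s\equiv x_0+y_0\,(2)}\P[A_s]\,\rho_{m(s)},\qquad\rho_m:=\P_{\CA,1}^{(m,m)}[X_u\ne Y_u\ \forall\,u\ge1].
\]
It then suffices to pin down the asymptotics of the two factors and sum.

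For $\P[A_s]$ I would invoke the equal-probability-of-paths property of CA$_{=}$ noted above (equivalently, the Beta--Binomial law from de Finetti): there are $\binom{s}{m-x_0}$ lattice paths from $(x_0,y_0)$ to $(m,m)$, each of probability $B(m,m)/B(x_0,y_0)$, so $\P[A_s]=\binom{2m-x_0-y_0}{m-x_0}B(m,m)/B(x_0,y_0)$, and Stirling's formula gives $\P[A_s]\sim 2^{\,1-x_0-y_0}m^{-1}/B(x_0,y_0)$ as $s\to\infty$. For $\rho_m$ I would condition on the first step out of $(m,m)$ and on the limiting fraction: from $(m+1,m)$ the walk starts at $D=1$, so by the standard gambler's-ruin escape probability it avoids $0$ forever with conditional probability $((2w-1)/w)^+$ given $W_{m+1,m}=w$, while the symmetric step to $(m,m+1)$ contributes the same integral after $w\mapsto1-w$; hence $\rho_m=\E[((2W-1)/W)^+]$ with $W\sim\text{Beta}(m+1,m)$. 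Since $W-\tfrac12$ has mean $O(1/m)$ and variance $\sim1/(8m)$, and $(2w-1)/w=4(w-\tfrac12)+O((w-\tfrac12)^2)$ near $\tfrac12$, a Laplace/Watson-type expansion of this integral around $w=\tfrac12$ yields $\rho_m\sim4\,\E[(W-\tfrac12)^+]\sim\frac{4}{\sqrt{2\pi}}\sqrt{\operatorname{Var}W}\sim(\pi m)^{-1/2}$.

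Putting the pieces together, the summand satisfies $\P[A_s]\rho_{m(s)}\sim 2^{\,1-x_0-y_0}m^{-3/2}/(\sqrt\pi\,B(x_0,y_0))$; summing over $m\ge m_0$ with $m_0=\lceil(t+x_0+y_0)/2\rceil\sim t/2$ and using $\sum_{m\ge M}m^{-3/2}\sim2M^{-1/2}$ gives $\P_{\CA,1}^{(x_0,y_0)}[T\ge t]\sim\frac{2^{\,2-x_0-y_0}}{\sqrt\pi\,B(x_0,y_0)}m_0^{-1/2}\sim\frac{2^{\,5/2-x_0-y_0}}{\sqrt\pi\,B(x_0,y_0)}t^{-1/2}$, which is precisely \eqref{eq:duration-equal}. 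The Stirling estimate for $\P[A_s]$ and the final summation (for which one only needs $\sum_{m\ge M}a_m\sim\sum_{m\ge M}c\,m^{-3/2}$ whenever $a_m\sim c\,m^{-3/2}$) are routine. The delicate step is the evaluation of $\rho_m$: one must justify replacing $\text{Beta}(m+1,m)$ by its Gaussian scaling limit \emph{inside} the nonlinear functional $\E[((2W-1)/W)^+]$, i.e.\ show that the mass of $W$ bounded away from $\tfrac12$ is asymptotically negligible and control the linearization error of $(2w-1)/w$ near $\tfrac12$ — a Laplace-method argument, but the crux of the proof.
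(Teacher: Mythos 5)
Your argument is correct, and its skeleton is the same as the paper's: decompose $\{T\ge t\}$ over the time of the last tie, write $\P[T=s]$ as (probability of a tie at time $s$) times (probability of no further tie after restarting from the tie state $(m,m)$), get the tie probability from the equal-path-probability/beta--binomial structure of the P\'olya urn, and sum an $m^{-3/2}$ tail; your Stirling constant and final summation reproduce \eqref{eq:duration-equal} exactly, and your argument for $\P[T=\infty]=0$ matches the paper's (the limit fraction is a continuous Beta variable). The genuine difference is how the no-further-tie probability is obtained. The paper imports the exact exit probability of Antal et al.\ \cite{Antal10}, giving the closed form $q_1(m,m)=\bigl(m\,2^{2m-1}B(m,m)\bigr)^{-1}$, after which only Stirling is needed; you instead derive the asymptotics $\rho_m\sim(\pi m)^{-1/2}$ self-containedly via de Finetti conditioning plus the gambler's-ruin escape probability, at the cost of the Laplace-type step you flag as the crux. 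That step is rigorizable as you sketch (the linearization error is $O(\E[(W-\tfrac12)^2])=O(1/m)$, lower order than $m^{-1/2}$), but it is in fact avoidable: your representation evaluates in closed form, since for $W\sim\mathrm{Beta}(m+1,m)$,
\[
\rho_m=\E\Bigl[\Bigl(\tfrac{2W-1}{W}\Bigr)^{\!+}\Bigr]
=\frac{1}{B(m+1,m)}\int_{1/2}^{1}(2w-1)\,w^{m-1}(1-w)^{m-1}\,dw
=\frac{4^{-m}}{m\,B(m+1,m)}
=\frac{1}{m\,2^{2m-1}B(m,m)},
\]
where the substitution $u=w-\tfrac12$ makes the integrand an exact derivative and $B(m+1,m)=B(m,m)/2$. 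This is precisely the paper's $q_1(m,m)$, so your route buys a proof independent of the cited exit-probability formula and, with this one-line integral, recovers the paper's exact intermediate identity rather than only its asymptotics; the remaining steps (Stirling for the tie probability, parity bookkeeping, and the tail sum) coincide in the two proofs.
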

It follows from \eqref{eq:duration-equal} that
\[
\P_{\CA,1}^{(x_0,y_0)}[T<\infty] = 1-\lim_{t\to\infty} \P_{\CA,1}^{(x_0,y_0)}[T\geq t] = 1,
\] 
i.e. the duration of CA$_{=}$ is almost surely finite. 

Note, however, that the power-law exponent is always $-1/2$, independent of the initial wealth $x_0$ and $y_0$.
Consequently, although the duration of CA$_{=}$ is finite rather than infinite as in RW$_{=}$, the 
expected duration is still infinite, even if $x_0$ is significantly larger than $y_0$ or vice versa.

On the other hand, the initial wealth $(x_0,y_0)$ does affect the location of the distribution. \prettyref{fig:duration-equal} shows the duration distributions from simulations for various values of initial wealth, with the asymptotes in \prettyref{eq:duration-equal} superimposed. Each simulation curve is the average of $10^5$ independent runs for $10^7$ time steps each. Note the good agreement between theory and simulation in the tails. When both $x_0$ and $y_0$ increase but are kept equal, the distribution curve shifts upwards, which means the competition lasts longer. When the initial wealth of only one agent ($y_0$ here) increases, the distribution curve shifts downwards, which means the competition is shorter.

\begin{figure}[!htb]
\centering
\includegraphics[width=\figwidth]{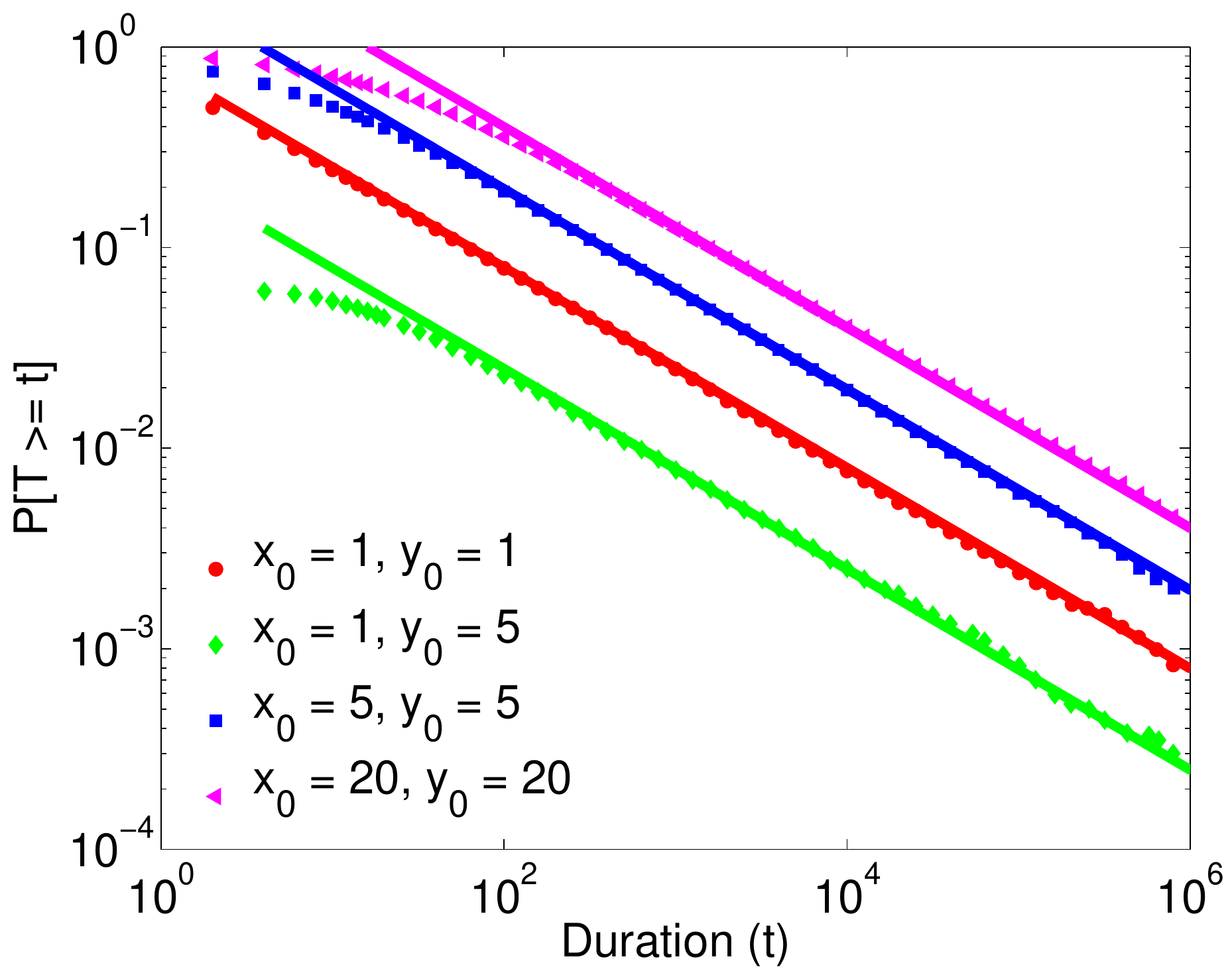}
\caption{{\label{fig:identCA}} Tail distribution for duration of CA$_{=}$  with various $(x_0,y_0)$. The dots are simulation results. 
The solid lines are the asymptotes in Eq.~\eqref{eq:duration-equal}.}
\label{fig:duration-equal}
\end{figure}

\subsubsection{Different Fitness Case: CA$_{\neq}$}

The next theorem shows that the tail distribution of the duration $T$ for CA$_{\neq}$ is asymptotically bounded by power laws from both below and above.

\begin{theorem}\label{thm:duration-unequal}
The tail distribution of the duration of a CA$_{\neq}$ competition has the following asymptotic bounds, 
\begin{equation}\label{eq:duration-unequal}
 \varphi_1 \, t^{-(r-1)x_0} \,  \lesssim \: \P_{\CA,r}^{(x_0,y_0)}[T \ge t] \: \lesssim  \varphi_2 \, t^{-(r-1) (x_0 - 1/r)},
\end{equation}
where 
\begin{equation}\label{eq:phi1}
\varphi_1 = \frac{\Gamma(rx_0+y_0)}{(r+1)x_0 2^{x_0+y_0-1} \Gamma(x_0)\Gamma(y_0)}, 
\end{equation}
and
\begin{equation}\label{eq:phi2}
\varphi_2 = \frac{2^{(r-1)(x_0-r^{-1})}\Gamma(r^{-1})\Gamma(rx_0+y_0)}{(r+1)(x_0-r^{-1})\Gamma(x_0)\Gamma(y_0)},
\end{equation}
where $\Gamma(x) = \int_0^\infty s^{x-1} e^{-s} ds$ is the gamma function.
\end{theorem}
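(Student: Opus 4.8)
\medskip
\noindent\textbf{Proof plan.} The plan is to reduce the tail of $T$ to the probability that the chain ever visits a given point on the diagonal, and then to estimate that hitting probability by Chapman--Kolmogorov together with asymptotics of ratios of Gamma functions.

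First, since a tie occurs only at a point $(m,m)$, which moreover can be reached only at time $2m-x_0-y_0$, we have, writing $m_t:=\lceil(t+x_0+y_0)/2\rceil$ and $P(m):=\P_{\CA,r}^{(x_0,y_0)}[\exists s:(X_s,Y_s)=(m,m)]$, that $\{T\ge t\}=\bigcup_{m\ge m_t}\{(m,m)\text{ is visited}\}$. A union bound gives $\P_{\CA,r}^{(x_0,y_0)}[T\ge t]\le\sum_{m\ge m_t}P(m)$. For a matching lower bound I would apply the Paley--Zygmund inequality to $Z:=\#\{m_t\le m\le 2m_t:(m,m)\text{ is visited}\}$: since, from any diagonal state, the expected number of later diagonal visits is bounded by a constant $C_r$ depending only on $r$ (this is the content of, or easily follows from, the geometric intensity tail $\P[N\ge n]\le(2/(r+1))^{n-1}$ of \prettyref{tab:result}, applied at a shifted start), one gets $\E[Z^2]\le(1+2C_r)\E[Z]$ and hence $\P[T\ge t]\ge\P[Z>0]\ge\E[Z]/(1+2C_r)\gtrsim\sum_{m\in[m_t,2m_t]}P(m)$. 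So the problem reduces to two-sided estimates of $P(m)$: a bound $P(m)\asymp m^{-\gamma}$ yields, after summation, $\P[T\ge t]\asymp t^{1-\gamma}$, and the claim amounts to showing $1+(r-1)(x_0-r^{-1})\le\gamma\le 1+(r-1)x_0$ with the stated constants.

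To estimate $P(m)$ I would use the Chapman--Kolmogorov equation, conditioning on the first time $\tau$ at which $\max(X_t,Y_t)=m$. On $\{X_\tau=m\}$ the chain is at $(m,b)$ for some $b<m$, and reaching $(m,m)$ then forces the next $m-b$ increments to go to $Y$, with probability $\prod_{j=b}^{m-1}\frac{j}{rm+j}$; symmetrically, on $\{Y_\tau=m\}$ the relevant factor is $\prod_{i=a}^{m-1}\frac{ri}{ri+m}=\frac{\Gamma(m)\,\Gamma(a+m/r)}{\Gamma(a)\,\Gamma(m+m/r)}$, whose Stirling asymptotics are responsible for the $\Gamma(r^{-1})$, the factor $2^{(r-1)(x_0-r^{-1})}$ and the exponent shift $x_0\mapsto x_0-r^{-1}$ appearing on the upper side of \eqref{eq:duration-unequal}. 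Both families of ``catch-up'' factors decay geometrically in $m-b$ (resp.\ $m-a$), so $P(m)$ is governed by the probability that, when one coordinate first hits $m$, the other is within $O(1)$ of it, i.e.\ by the probability of a ``neck-and-neck'' race to level $m$. For the upper bound on this race probability I would bound the CA transition probabilities before time $\tau$ using only $X_t\ge x_0$ and $Y_t\ge y_0$ (so that the denominators $rX_t+Y_t$ are at least $rx_0+y_0+(\text{number of steps})$); this produces the factor $\Gamma(rx_0+y_0)/(\Gamma(x_0)\Gamma(y_0))$ and the powers of $2$ in $\varphi_2$, and after Stirling yields $P(m)\lesssim c\,m^{-1-(r-1)(x_0-r^{-1})}$. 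For the lower bound I would retain only an explicit ``neck-and-neck'' scenario in which the chain pays an $O(m^{-rx_0})$-order cost to leave the corner with $Y$ modestly ahead and then, invoking the fact noted in Section~\ref{sec:model} that once $X_t,Y_t$ are large and comparable the difference $X_t-Y_t$ behaves like a biased random walk with parameter $r/(r+1)$ (together with the local central limit theorem for that walk), reaches $(m,m)$ at only polynomial further cost; tracking the constants gives $P(m)\gtrsim c\,m^{-1-(r-1)x_0}$ with the constant $\varphi_1$.

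The step I expect to be the main obstacle is the lower bound. A tie at a time close to $t$ is exponentially unlikely --- near the diagonal the difference $X_t-Y_t$ has a \emph{constant} positive drift $\tfrac{r-1}{r+1}$ --- so the polynomial tail is carried entirely by the weaker requirement that the competition merely survives until \emph{some} time $\ge t$; making this quantitative requires a sufficiently sharp lower bound on the boundary-phase probability near $(x_0,y_0)$ (this is where the exact $x_0$-dependence of the exponent comes from) together with the random-walk comparison in the large-and-comparable regime to show that, once neck-and-neck at a high level, the competition stays undecided long enough with only polynomially small probability. Finally, extracting the precise constants $\varphi_1,\varphi_2$ of \eqref{eq:phi1}--\eqref{eq:phi2} and closing the residual $r^{-1}$ gap in the exponent is a matter of careful Gamma-function bookkeeping throughout.
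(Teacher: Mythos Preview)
Your high-level reduction is sound: since a tie at level $m$ can occur only at time $2m-x_0-y_0$, the tail of $T$ is controlled by $P(m)=p_r(x_0,y_0;m,m)$, and indeed the paper works with exactly this quantity. But two aspects of your plan diverge from the paper, and one of them contains a genuine gap.

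First, the paper does not use a union bound plus Paley--Zygmund. It uses the exact factorization $\P[T=t]=p_r(x_0,y_0;m,m)\,q_r(m,m)$, where $q_r(m,m)$ is the probability of no further tie from $(m,m)$, and then proves $q_r(m,m)\to(r-1)/(r+1)$ by comparing the shifted chain to the biased random walk. This immediately yields two-sided asymptotics with the stated constants; your Paley--Zygmund route would work for the exponent but would produce a strictly worse constant (roughly $(r-1)/(3r+1)$ in place of $(r-1)/(r+1)$), so you would not recover $\varphi_1$.

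Second, and more seriously, your proposed upper bound on $P(m)$ does not work as sketched. Lower-bounding the denominators by $rx_0+y_0+j$ (from $X_t\ge x_0$) gives, for \emph{each} path $\pi$ to $(m,m)$, a bound of the form $r^{m-x_0}(x_0)_{m-x_0}(y_0)_{m-y_0}/(rx_0+y_0)_{2m-x_0-y_0}$; summing over the $\binom{2m-x_0-y_0}{m-x_0}$ paths (or, equivalently, over the possible hitting states in your first-hitting decomposition) introduces a factor $\sim 2^{2m}$ which, together with the $r^{m-x_0}$ in the numerator, makes the bound diverge exponentially rather than decay like $m^{-1-(r-1)(x_0-r^{-1})}$. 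Conditioning on the first time $\max(X_t,Y_t)=m$ does not help: the catch-up factors indeed decay geometrically in the gap, but then you must bound the probability that the race ends with an $O(1)$ gap, which is $p_r(x_0,y_0;m-O(1),m-1)$ or $p_r(x_0,y_0;m-1,m-O(1))$---precisely the object you started with. The $\Gamma(r^{-1})$ and the $r^{-1}$ exponent shift do not come from the Stirling asymptotics of the catch-up product; in the paper they arise from a different source.

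The paper's key idea, which your sketch is missing, is to bound $p_r(x_0,y_0;x_0+k,y_0+h)$ \emph{directly} by induction on the Chapman--Kolmogorov recursion, without ever summing over paths. The lower bound (Lemma~1) shows $p_r\ge \frac{(x_0)_k(y_0)_h}{(rx_0+y_0)_{k+h}}\binom{k+h}{k}$; the induction step boils down to the inequality $(rk+h)(k+h+c_0-1)\ge(k+h)(rk+h+c_0-1)$, which is just $r\ge 1$. The upper bound (Lemma~2) is the more delicate and asymmetric $p_r\le\frac{(x_0)_k(y_0)_h}{(r^{-1})_k(rx_0+y_0)_h}$, also proved by induction; the appearance of $(r^{-1})_k=\Gamma(k+r^{-1})/\Gamma(r^{-1})$ in the $X$-direction is exactly what produces $\Gamma(r^{-1})$ and the exponent shift $x_0\mapsto x_0-r^{-1}$ in $\varphi_2$. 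Once these two lemmas and the limit $q_r(m,m)\to(r-1)/(r+1)$ are in place, the rest is just $\Gamma(k+a)/\Gamma(k+b)\sim k^{a-b}$ and summation.
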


It follows from the lower bound that $\P_{\CA,r}^{(x_0,y_0)}[T<\infty] = 1$ for $r>1$, i.e. the duration for CA$_{\neq}$ is almost surely finite as is for CA$_{=}$. The constants $\varphi_1$ and $\varphi_2$ turn out to be pretty loose, so the bounds are best interpreted as bounds on the tail exponent.

Note that the power-law exponents in the upper and lower bounds depend on $x_0$ but not on $y_0$, and they differ
only by $1-1/r < 1$. In this sense, the shape of the distribution at large $t$ is largely determined by the fitness ratio
and the initial wealth of the fitter agent, while the initial wealth of the less fit plays a much weaker role. This is illustrated in \prettyref{fig:duration-unequal}, which shows the duration distributions from simulations for $r=1.2$ and various values of $(x_0,y_0)$, alongside the lower bounds from \prettyref{eq:duration-unequal} that are shifted closer to the simulation results for easier comparison of the slopes. Each simulation curve is the average of $10^5$ independent runs for $10^9$ time steps each.

\begin{figure}[ht]
\centering
\begin{subfigure}{1\columnwidth}
\centering
\includegraphics[width=\figwidth]{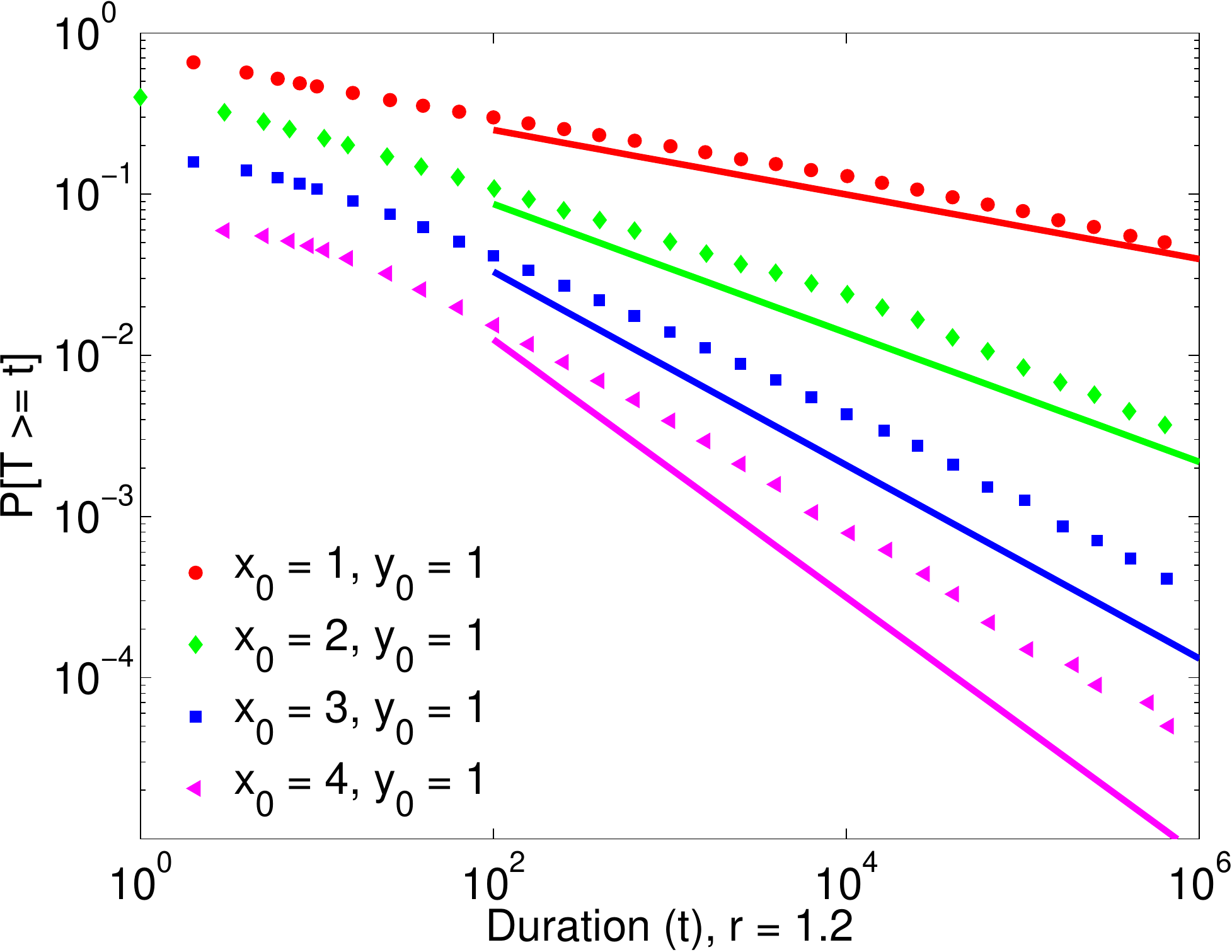}
\caption{}
\label{fig:duration-unequal-x0}
\end{subfigure}
\begin{subfigure}{1\columnwidth}
\centering
\includegraphics[width= \figwidth]{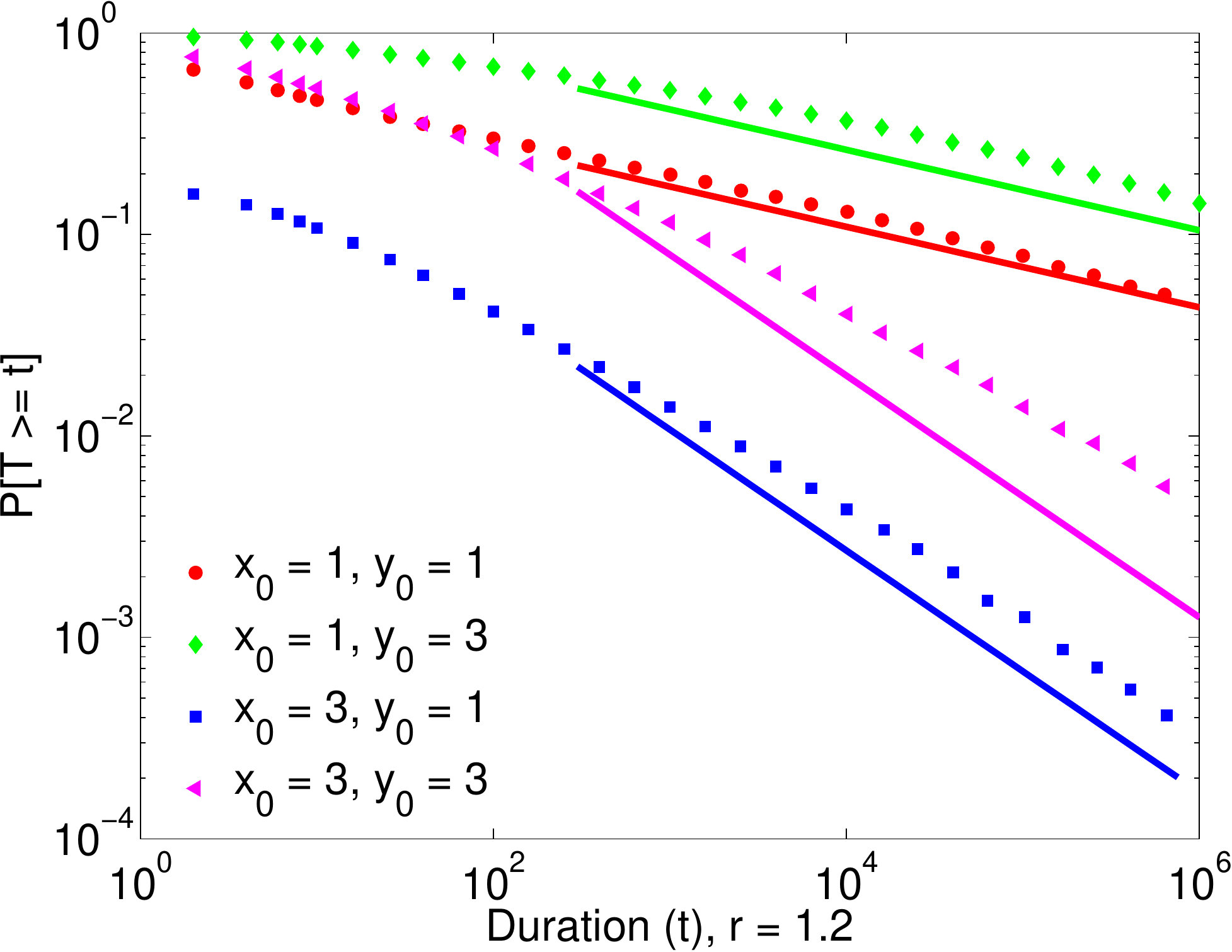}
\caption{}
\label{fig:duration-unequal-y0} 
\end{subfigure}
\caption{
Tail distribution for duration of CA$_{\neq}$ with $r=1.2$ and various $(x_0,y_0)$. The dots are simulation results. 
The solid lines are the asymptotic lower bound in Eq.~\eqref{eq:duration-unequal} but shifted closer to the simulation results for easier visual comparison of the slopes.}
\label{fig:duration-unequal}
\end{figure}

 \prettyref{fig:duration-unequal-x0}
shows how the slopes of the distribution curves, which correspond to the power-law exponents, depend critically on $x_0$. The impact of $x_0$ is two-fold. As $x_0$ increases, the distribution curve becomes more tilted as predicted by the bounds. At the same time, it also shifts downwards. Both changes mean that the competition tends to be shorter.

\prettyref{fig:duration-unequal-y0} shows the impact of changing both $x_0$ and $y_0$. When $x_0$ is fixed, increasing $y_0$ only results in a slight decrease in the absolute value of the slope, in agreement with \prettyref{eq:duration-unequal}. The distribution curve shifts upwards, which means the competition tends to last longer. When both $x_0$ and $y_0$ increase, the situation becomes more intricate. The curve may shift upwards while bending down faster in the tail, which could possibly lead to a crossover in the old and new curves, as is the case of going from $(x_0,y_0)=(1,1)$ to $(x_0,y_0)=(3,3)$. In this case, the new competition is more likely to have a medium long duration.

\subsubsection{Struggle-of-the-Fittest Phenomenon}

Now we look at the impact of fitness ratio $r$ on duration.
Contrasting Eqs.~\eqref{eq:duration-equal} and~\eqref{eq:duration-unequal} leads to an interesting observation.
Departing from CA$_{=}$ by slightly increasing the fitness ratio $r$ from 1 to $1+\varepsilon$, where $\varepsilon$ 
is close to 0, precipitates a {\em significant increase} in the probability of long-lasting competitions, as manifested 
in the discontinuous jump in the power-law exponents from $-1/2$ in Eq.~\eqref{eq:duration-equal} to $-\varepsilon x_0\approx 0$ in Eq.~\eqref{eq:duration-unequal}.
This is opposite to what happens in RW competitions, where a slight increase in fitness departing from RW$_{=}$ to RW$_{\neq}$ 
transforms the competition from one that never ends to one with a geometrically distributed duration.
The lower bound in Eq.~\eqref{eq:duration-unequal} shows that CA$_{\neq}$ with $r < 1+(2x_0)^{-1}$ is more likely to have
long-lasting competitions than CA$_{=}$, despite the fact that the fitter agent is bound to become
the ultimate winner. We call this phenomenon ``struggle of the fittest''.

\prettyref{fig:duration-r}  shows the duration of simulated CA competitions for various fitness ratios $r$. Each simulation curve is the average of $10^5$ independent runs for $10^9$ time steps each. Note how the distribution of duration jumps upward from the curve for CA$_{=}$ to the curve for CA$_{\neq}$ with $r=1.1$. It also shows how the curves for CA$_{\neq}$ become more and more tilted as $r$  increases, being roughly parallel to the CA$_{=}$ curve at $r= 1+(2x_0)^{-1} = 1.5$. 

\begin{figure}[ht]
\centering
\includegraphics[width=\figwidth]{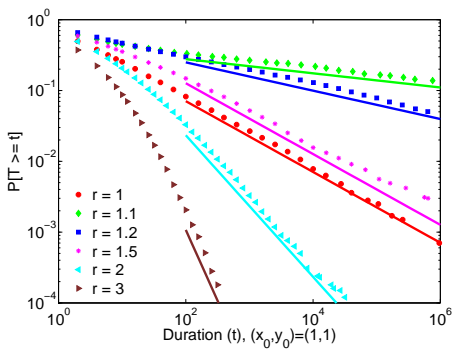}
\caption{
Tail distribution for duration of CA  with various $r$. The dots are simulation results. 
For $r=1$, the solid line is the asymptote in Eq.~\eqref{eq:duration-equal}.
For $r>1$, the solid lines are the lower bound in Eq.~\eqref{eq:duration-unequal} but shifted as in \prettyref{fig:duration-unequal}.}
\label{fig:duration-r}
\end{figure}

\subsection{Competition Intensity}
\label{subsec:intensity}

Given that CA competitions are long-lasting, one might expect them also to be intense, i.e., exhibit many ties ($X_t = Y_t$). As we will see in this section, this intuition is appropriate only for CA$_=$ but not for CA$_{\neq}$.

\subsubsection{Equal Fitness Case: CA$_{=}$}

The following theorem shows that the intensity $N$ of CA$_{=}$ is heavy-tailed with an asymptotic power-law distribution.

\begin{theorem}\label{thm:intensity-equal}
The intensity of a CA$_{=}$ competition has the following asymptotic tail distribution,
\begin{equation} \label{eq:intensity-equal}
\P_{\CA,1}^{(x_0,y_0)}[N\geq n] \sim \frac{1}{2^{x_0+y_0-2} B(x_0,y_0)} n^{-1},
\end{equation}
where $B(x_0,y_0)$ is the beta function as in \prettyref{eq:duration-equal}.
\end{theorem}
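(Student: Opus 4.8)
The plan is to reduce $N$ to the number of visits of the CA$_{=}$ process to the diagonal $\{(m,m)\}$, obtain an \emph{exact} formula for $\P_{\CA,1}^{(x_0,y_0)}[N\ge n]$, and then read off the tail by a Laplace-type estimate. Because one unit of wealth is added per step, $X_t+Y_t$ is strictly increasing, so a tie can occur only at a state $(m,m)$ and at most once there; hence $N$ is exactly the number of diagonal states visited. By the symmetry of CA$_{=}$ under swapping $X$ and $Y$ we may assume $x_0\ge y_0$.

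The quickest route uses de Finetti's representation of the P\'olya urn: conditionally on the almost-sure limit $W=\lim_{t}X_t/(X_t+Y_t)$, which is $\mathrm{Beta}(x_0,y_0)$ distributed, the successive allocations are i.i.d.\ $\mathrm{Bernoulli}(W)$, so given $W=w$ the difference $D_t=X_t-Y_t$ is a nearest-neighbour random walk on $\mathbb Z$ started at $d_0=x_0-y_0\ge0$ with up-probability $w$, and $N$ is its number of visits to $0$. Standard gambler's-ruin identities give, for $w\ne1/2$, return probability $2\min(w,1-w)$ to $0$ and hitting probability $\min\{1,((1-w)/w)^{d_0}\}$ of $0$ from $d_0$; thus, given $W=w$, the number of visits to $0$ is geometric and $\P[N\ge n\mid W=w]=\P_{d_0}^{(w)}[\text{hit }0]\,(2\min(w,1-w))^{n-1}$. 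Integrating against the $\mathrm{Beta}(x_0,y_0)$ density, and folding the region $\{w>1/2\}$ onto $(0,1/2)$ via $w\mapsto1-w$ (under which the hitting factor combines with the density to reproduce the integrand from the other half, so the two half-ranges contribute equal integrals), yields the exact identity
\begin{equation}\label{eq:N-exact}
\P_{\CA,1}^{(x_0,y_0)}[N\ge n]=\frac{2^{\,n}}{B(x_0,y_0)}\int_0^{1/2}s^{x_0+n-2}(1-s)^{y_0-1}\,ds .
\end{equation}
A purely combinatorial derivation of \eqref{eq:N-exact} is also available and stays within the path-counting property of CA$_{=}$ highlighted in Section~\ref{subsec:CA-model}: the first-hitting distribution of the diagonal from $(a,b)$ is a ballot number times the (path-independent) path weight, and resumming the resulting series with the Catalan-power identity $\sum_{j\ge0}\frac{r}{2j+r}\binom{2j+r}{j}x^j=C(x)^r$ and $C(s(1-s))=1/\max(s,1-s)$ produces the same formula; alternatively \eqref{eq:N-exact} can be verified by induction on $n$ from one-step analysis using $B(m+1,m)=\tfrac12B(m,m)$ and $B(a+1,b)=\tfrac{a}{a+b}B(a,b)$.

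It then remains to extract the asymptotics from \eqref{eq:N-exact}. The factor $s^{x_0+n-2}$ concentrates the integrand at $s=1/2$; substituting $s=\tfrac12e^{-v}$ gives $\int_0^{1/2}s^{x_0+n-2}(1-s)^{y_0-1}\,ds=2^{-(x_0+n-1)}\int_0^\infty e^{-(n+x_0-1)v}(1-\tfrac12e^{-v})^{y_0-1}\,dv$, and Watson's lemma (or dominated convergence after rescaling $v$) gives $\int_0^\infty e^{-(n+x_0-1)v}(1-\tfrac12e^{-v})^{y_0-1}\,dv\sim 2^{-(y_0-1)}/n$. Hence the right-hand side of \eqref{eq:N-exact} is $\sim 2^{\,n}\big/\big(2^{x_0+y_0+n-2}B(x_0,y_0)\,n\big)=\big(2^{x_0+y_0-2}B(x_0,y_0)\big)^{-1}n^{-1}$, which is \eqref{eq:intensity-equal}.

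The crux is the exact identity \eqref{eq:N-exact}; once it is in hand the tail estimate is elementary. In the de Finetti approach the care goes into the split according to whether $w<1/2$ or $w>1/2$, the parity constraint (the walk is at $0$ only at times of the parity of $d_0$), and checking that the hitting and return factors combine so the two half-integrals coincide; in the combinatorial approach the effort is in identifying the ballot-number first-passage weights and resumming the Catalan series (plus justifying the interchange of the positive-term sums and integrals). The same machinery also recovers Theorem~\ref{thm:duration-equal}, since the duration is $2K-x_0-y_0$ with $K$ the last diagonal coordinate visited, so the two statements are natural companions.
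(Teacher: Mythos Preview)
Your argument is correct, and it is genuinely different from the paper's. The paper never invokes the de~Finetti/exchangeability representation; instead it exploits the path-weight uniformity of the P\'olya urn to write $\P_{\CA,1}^{(x_0,y_0)}[\tau_n=t]$ as a beta ratio times $\P_{\RW,1}^{(x_0,y_0)}[\tau_n=t]$, multiplies by the escape probability $q_1(z_0+k,z_0+k)$, and then sums over $t$ using the generating function of the $n$-th return time in a \emph{simple} symmetric random walk. This produces an integral $G_n(1;d_0)=\int_0^1 u(1-u^2)^{\min(x_0,y_0)-1}(1-u)^{n+d_0-1}\,du$, which the paper sandwiches between $B(2,n+\max(x_0,y_0)-1)$ and $B(2,n+d_0)$ and then sums over $n$.

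Your route---condition on $W\sim\mathrm{Beta}(x_0,y_0)$, reduce to a \emph{biased} nearest-neighbour walk, and use gambler's-ruin formulas---is shorter and yields the exact closed form \eqref{eq:N-exact} for the tail directly, rather than matching upper and lower bounds on the density. The folding step is the one place to be explicit: with $x_0\ge y_0$ and $d_0=x_0-y_0$, on $(1/2,1)$ the hitting factor $((1-w)/w)^{d_0}$ turns $w^{x_0-1}(1-w)^{y_0-1}$ into $w^{y_0-1}(1-w)^{x_0-1}$, so after $w\mapsto 1-w$ the two half-integrals coincide, giving the factor $2$ in $2^n$. What the paper's approach buys is that it stays entirely within the path-counting framework used for the other theorems (and in particular does not presuppose integer $x_0,y_0$, which the exchangeability argument does); what your approach buys is a cleaner exact identity and a one-line Laplace estimate in place of the generating-function manipulation and the two-sided squeeze.
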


In this case, the intensity has infinite expectation, as does the duration. \prettyref{fig:intensity-equal} shows the duration distributions from simulations for various values of initial wealth, with the asymptotes in \prettyref{eq:intensity-equal} superimposed. Each simulation curve is the average of $10^5$ independent runs for $10^7$ time steps each. We observe the same behavior as in \prettyref{fig:duration-equal}.  When both $x_0$ and $y_0$ increase but are kept equal, the distribution curve shifts upwards, which means the competition is more intense. When the initial wealth of only one agent ($y_0$ here) increases, the distribution curve shifts downwards, which means the competition is less intense.

We mention in passing that 
if we have a finite observation time $t_{f}$, the expected intensity $N_{t_f}$ by time $t_f$ grows
as $\log{t_{f}}$, a phenomenon observed for the related CA model in Godr{\`e}che et al.~\cite{JSM_GGL10}.

\begin{figure}[htb]
\centering
\includegraphics[width=\figwidth]{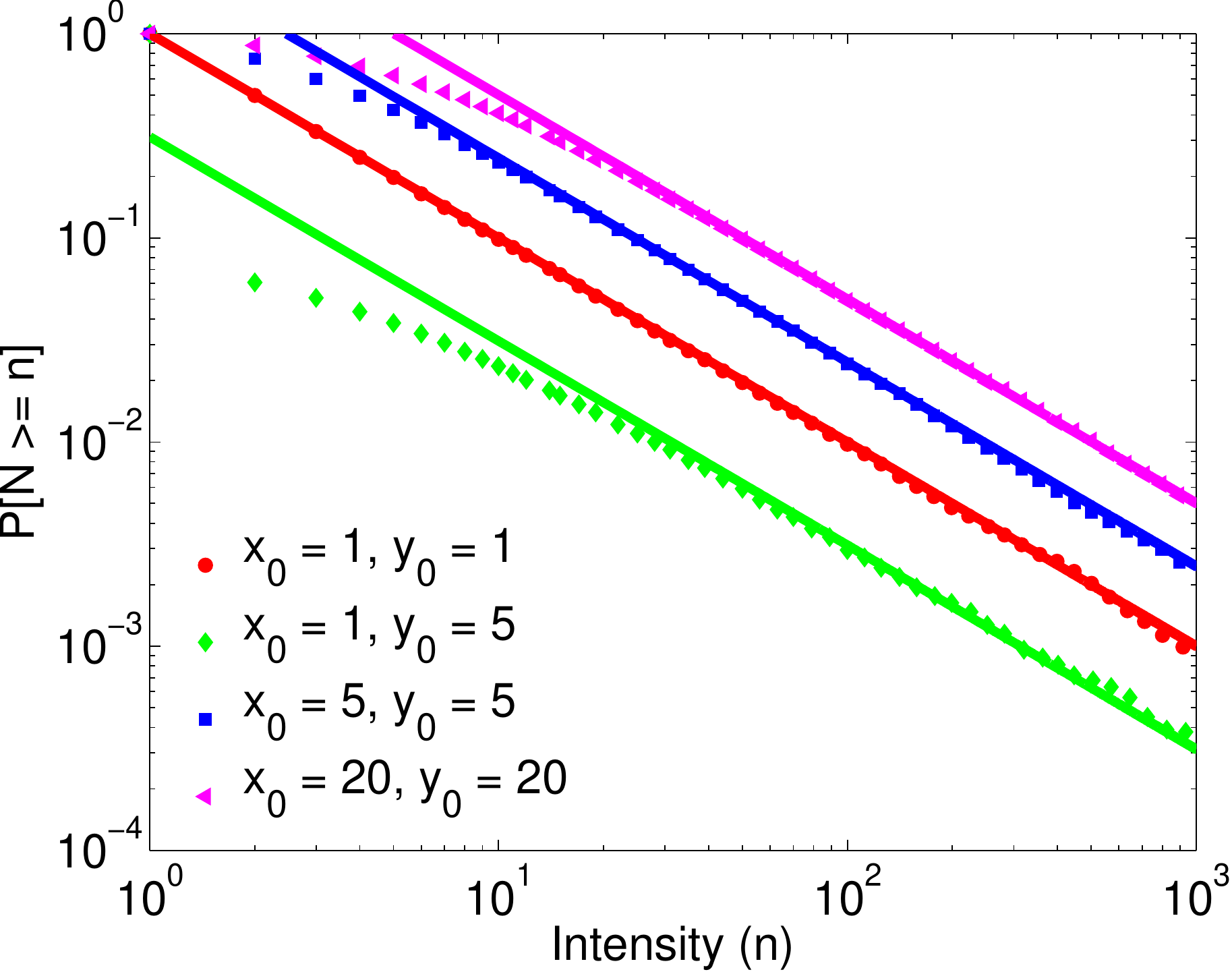}
\caption{Tail distribution for intensity of CA$_{=}$ with various $(x_0,y_0)$. The dots are simulation results. 
The solid lines are the asymptotes from Eq.~\eqref{eq:intensity-equal}. }
\label{fig:intensity-equal}
\end{figure}

\subsubsection{Different Fitness Case: CA$_{\neq}$}

In sharp contrast, CA$_{\neq}$ competitions are not intense despite their long duration. In fact their intensity is
surprisingly mild, bounded above by a geometric distribution, as shown in the next theorem.

\begin{theorem}\label{thm:intensity-unequal}
The tail distribution of the intensity of a CA$_{\neq}$ competition has the following upper bound,
\begin{equation}\label{eq:intensity-unequal}
\P_{\CA,r}^{(x_0,y_0)}[N\geq n] \leq C\left(\frac{2}{1+r}\right)^{n-1},
\end{equation}
with
\[
C =  \begin{cases}
1, & x_0 \leq y_0,\\
\frac{(y_0)_{x_0-y_0}}{(rx_0+y_0)_{x_0-y_0}} \left(1+\frac{1}{r}\right)^{x_0 - y_0}, & x_0 > y_0,
\end{cases}
\]
where $(x)_k = \prod_{i=0}^{k-1} (x+i)$ is the Pochhammer symbol.
\end{theorem}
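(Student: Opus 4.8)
The plan is to bound $\P^{(x_0,y_0)}_{\CA,r}[N\ge n]$ by the probability of a first tie times $n-1$ copies of a ``return'' probability, controlling every factor with a single explicit superharmonic function. Let $D=\{(m,m):m\ge1\}$ be the diagonal, let $\tau$ be the first time $(X_t,Y_t)$ hits $D$ (for a start already in $D$, the first \emph{return}), and put $h(a,b)=\P^{(a,b)}_{\CA,r}[\tau<\infty]$. Writing $\tau_1<\tau_2<\cdots$ for the successive tie times and applying the strong Markov property at each $\tau_{k-1}$,
\[
\P^{(x_0,y_0)}_{\CA,r}[N\ge n]=\P^{(x_0,y_0)}_{\CA,r}[\tau_n<\infty]\le h(x_0,y_0)\Bigl(\sup_{m\ge1}h(m,m)\Bigr)^{n-1},
\]
with $h(x_0,y_0)$ read as $1$ when $x_0=y_0$. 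So it suffices to prove (i) $\sup_{m\ge1}h(m,m)\le\frac{2}{1+r}$ and (ii) $h(x_0,y_0)\le C$.

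Both will follow from the assertion that
\[
C(x,y):=\Bigl(1+\tfrac1r\Bigr)^{x-y}\frac{(y)_{x-y}}{(rx+y)_{x-y}},\qquad x\ge y\ge1,
\]
is a nonnegative superharmonic function of the CA$_{\ne}$ chain on $\{x\ge y\}$, i.e.\ $\frac{rx}{rx+y}C(x+1,y)+\frac{y}{rx+y}C(x,y+1)\le C(x,y)$ whenever $x>y$. Granting it, $C\equiv1$ on $D$, so $C(X_{t\wedge\tau},Y_{t\wedge\tau})$ is a nonnegative supermartingale that equals $1$ on $\{\tau<\infty\}$; optional stopping and Fatou give $h(a,b)\le C(a,b)$ for every $a\ge b$. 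This is exactly (ii) when $x_0>y_0$ (with $C=C(x_0,y_0)$), while for $x_0\le y_0$ the trivial bound $h(x_0,y_0)\le1=C$ does the job. For (i), from $(m,m)$ the chain moves to $(m+1,m)$ or $(m,m+1)$, so $h(m,m)\le\frac{r}{r+1}h(m+1,m)+\frac1{r+1}\le\frac{r}{r+1}C(m+1,m)+\frac1{r+1}$, and $C(m+1,m)=\frac{r+1}{r}\cdot\frac{m}{(r+1)m+r}<\frac1r$ gives $h(m,m)<\frac{2}{1+r}$.

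The real content is the superharmonicity of $C$. Writing $A=rx+y$, $d=x-y$ and using the Pochhammer identities $(y)_{d+1}=x(y)_d$, $(y+1)_{d-1}=(y)_d/y$, $(A)_d=A(A+1)_{d-1}$, the required inequality collapses after elementary cancellation to
\[
(r+1)^2x^2\,\frac{\Gamma\bigl((r+1)x\bigr)\Gamma\bigl(r(x+1)+y\bigr)}{\Gamma\bigl(rx+y+1\bigr)\Gamma\bigl((r+1)(x+1)\bigr)}\le1,
\]
where I have used $A+d=(r+1)x$, $A+r=r(x+1)+y$, $A+r+d+1=(r+1)(x+1)$; equivalently, in factored form, $\frac{(r+1)^2x^2}{A(A+r+d)}\prod_{i=0}^{d-1}\frac{A+i}{A+r+i}\le1$. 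Taking logarithms and bounding $\sum_{i=0}^{d-1}\log\frac{A+i}{A+r+i}$ by $\int_0^d\log\frac{A+t}{A+r+t}\,dt$ (the integrand is increasing in $t$), and noting $(r+1)^2x^2=(A+d)^2$, the claim reduces to showing
\[
\Delta:=\int_0^r\!\!\int_0^d\frac{du\,dt}{A+t+u}\ \ge\ \log\frac{(A+d)^2}{A(A+r+d)}.
\]

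Here $\Delta=F(A)+F(A+r+d)-F(A+d)-F(A+r)$ with $F(z)=z\log z$ (the mixed second difference of the convex $F$), so since $\{A,A+r+d\}$ majorizes $\{A+d,A+r\}$ we get $\Delta\ge0$; this already proves the inequality whenever the right-hand side is nonpositive, i.e.\ $d^2+Ad\le Ar$, which covers all $x-y$ that are small relative to $r$. The remaining range — $d$ large, and the near-critical regime $r\downarrow1$ where the inequality is asymptotically tight — is the main obstacle: there one must use the actual size of $\Delta$, e.g.\ by decomposing the double integral along the lines $t+u=\text{const}$ into the pieces with $t+u\in[0,r]$, $[r,d]$, $[d,r+d]$ (whose middle piece is $r\log\frac{A+d}{A+r}$) and comparing the total against $\log\frac{A+d}{A}-\log\frac{A+r+d}{A+d}$; the state constraint $A\ge r(d+1)+1$ (equivalent to $y\ge1$) is precisely what makes this estimate go through. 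Once $C$ is shown superharmonic, parts (i), (ii) and the strong-Markov bound combine to give the stated bound.
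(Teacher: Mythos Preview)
Your strong-Markov decomposition and the idea of bounding the first-hit probability $h(a,b)$ by a superharmonic majorant $C(\cdot,\cdot)$ equal to $1$ on the diagonal are correct and clean; the reduction of superharmonicity to the product inequality
\[
\frac{(A+d)^2}{A(A+r+d)}\prod_{i=0}^{d-1}\frac{A+i}{A+r+i}\le 1,\qquad A=rx+y,\ d=x-y,
\]
and your integral rewriting $\Delta=F(A)+F(A+r+d)-F(A+d)-F(A+r)$ with $F(z)=z\log z$ are also right. The genuine gap is that you do not prove this inequality in the regime where $(A+d)^2>A(A+r+d)$ (which, as you note, is the generic one for $d$ large and $r$ near~$1$); you describe a decomposition of $\Delta$ and assert that ``the state constraint $A\ge r(d+1)+1$ is precisely what makes this estimate go through,'' but no estimate is actually carried out. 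Since both (i) and (ii) in your scheme rest on this lemma, the proof is incomplete as written.

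By contrast, the paper's argument for $h(x_0,y_0)\le C$ avoids the superharmonicity calculation entirely. It uses the explicit factorization
\[
\P_{\CA,r}^{(x_0,y_0)}[\pi]=\frac{r^{x_t-x_0}(x_0)_{x_t-x_0}(y_0)_{y_t-y_0}}{\prod_{j=0}^{t-1}\bigl[(r-1)x_j+x_0+y_0+j\bigr]},
\]
valid for any path $\pi$ from $(x_0,y_0)$, and observes that among first-tie paths this is maximized by the path $\pi^*$ with the smallest admissible $x_j$'s (i.e., $Y$ catches up first, then the walk hugs the diagonal). A short direct computation bounds $\P[\pi^*]$ by $C\cdot\bigl(\tfrac{r}{r+1}\bigr)^{x_t-x_0}\bigl(\tfrac{1}{r+1}\bigr)^{y_t-y_0}=C\cdot\P_{\RW,r}[\pi]$; summing over paths and over $t$, and using the biased-walk ruin probability $\P_{\RW,r}^{(x_0,y_0)}[\tau_1<\infty]=r^{-(x_0-y_0)}$, gives $h(x_0,y_0)\le C$ immediately. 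The return bound $h(m,m)\le 2/(r+1)$ then follows exactly as you wrote, from the one-step expansion and the case $(m+1,m)$.

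In short, your potential-theoretic route is a nice alternative framing, but the crux (superharmonicity of $C$) is asserted rather than proved. Either close that inequality in full---your integral decomposition may well work, but it needs to be executed---or replace the superharmonicity step by the paper's extremal-path comparison, which is elementary and complete.
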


Note that the expectation and all higher moments of $N$ are finite. 
Therefore, the intensity of CA competitions changes dramatically when
the fitnesses of the two parties become unequal, the distribution shifting from a
power-law tail to an exponential tail. This is illustrated in
Figure \ref{fig:N}, where each simulation curve is the average of $10^5$ independent runs for $10^9$ time steps each. An important observation is that both CA$_{\neq}$\ and RW$_{\neq}$\ competitions 
have intensities that are upper bounded by identical exponential tails (see \prettyref{tab:result}), 
while exhibiting fundamentally different durations.

\begin{figure}[htb]
\centering
\includegraphics[width=\figwidth]{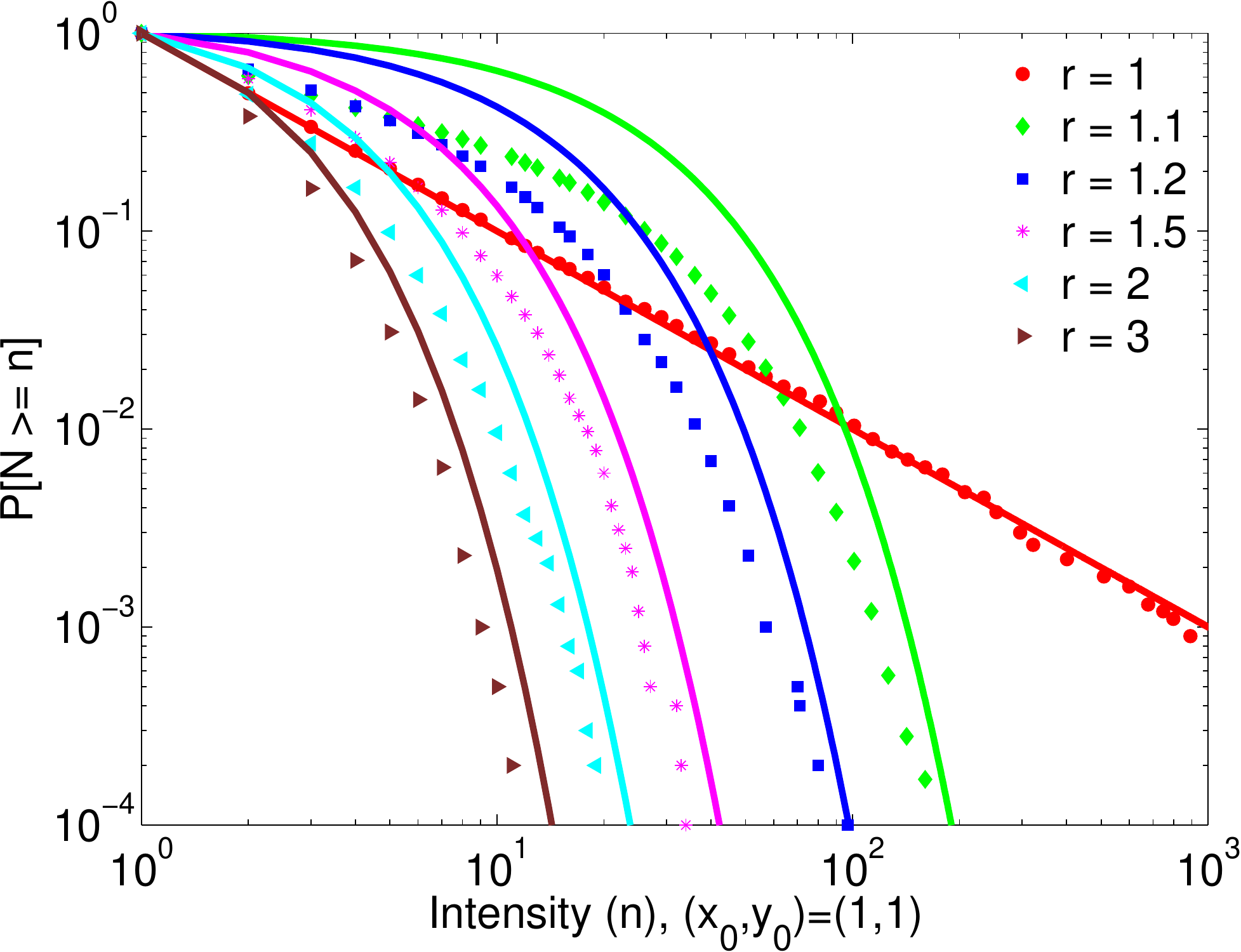}
\caption{
Tail distribution for intensity of CA with various $r$. The dots are simulation results. 
The solid lines are the upper bounds from \prettyref{eq:intensity-unequal} for $r>1$, and the asymptote in \prettyref{eq:intensity-equal} for $r = 1$.}
\label{fig:N}
\end{figure}


Why are CA$_{\neq}$\ competitions simultaneously not intense and long-lasting?
The answer resides in the probability of $Y$ being the eventual winner. In CA$_=$
competitions, $Y$ wins with probability $y_0/(x_0+y_0)$, while in CA$_{\neq}$ competitions $Y$ (the less fit) never wins.
However, for small values of $r$, especially for those very close to one, the dynamics in the initial stages of the competition closely follows that of CA$_{=}$. Thus there is a non-negligible chance that $Y$ takes a significant lead, with the CA effect helping it uphold the lead for a long period of time over which there is no tie.
Eventually, however, fitness effect outweighs the CA effect, and $X$ catches up with $Y$. By then they both have large accumulated wealth, which makes CA$_{\neq}$ behave like RW$_{\neq}$ in the vicinity of $X=Y$, allowing $X$ to quickly establish a lead ahead of $Y$. At this final stage both fitness and CA effects work in favor of $X$, and $Y$ stands little chance in taking the lead again. 
To summarize, the less fit agent has a non-negligible probability of taking an early lead which can last for a very long time 
due to the CA effect, but it will ultimately surrender the lead to the fitter agent and never lead again, a phenomenon 
that we call ``delusion of the weakest'', which is the flip-side of ``struggle of the fittest''.

\prettyref{fig:sample-path-long-duration} illustrates this observation by showing sample paths for different values of $r$, 
all generated using the same sequence of random bits. Note that for $r=1$ (identical fitness), $Y$ wins quickly, whereas 
for $r=1.1$ the fitter agent $X$, having trailed behind for a long time, eventually takes over after 69,426 time steps.
Finally, for both $r = 1.2$ and $r = 1.5$ agent $X$ has no trouble quickly winning the competition.
These sample paths showcase the long struggle of the ``slightly'' fitter agent in competitions with CA effects.

\begin{figure}[htb]
\centering
\includegraphics[width=\figwidth]{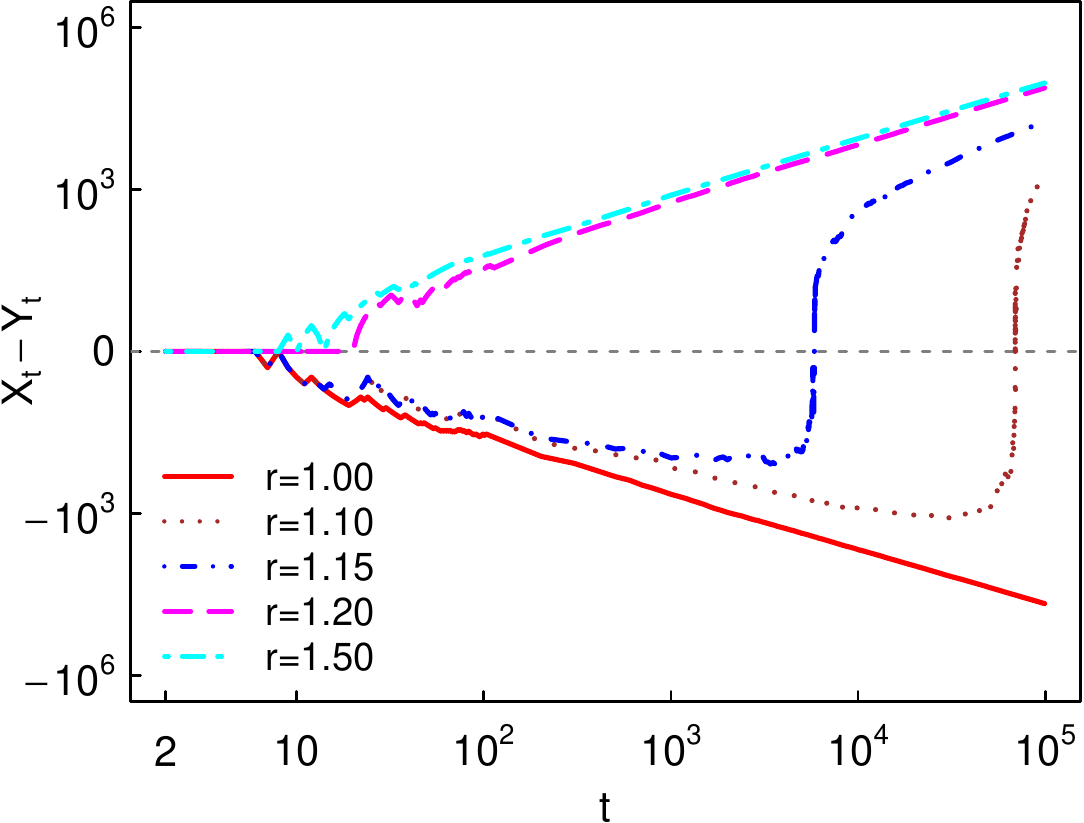}
\caption{``Delusion of the weakest'': sample paths for different values of $r$ ($x_0 =y_0 = 1$), all generated using the same sequence of random bits}
\label{fig:sample-path-long-duration}
\end{figure}

\subsection{Interplay of Duration and Intensity}

In this section, we study the relationship between duration and intensity. 
Note that duration gives a natural upper bound $N \leq T/2$ for intensity, i.e.,
the number of ties is at most half of the duration in any competition.
In CA$_{=}$, duration and intensity are strongly and positively correlated.
In fact, a tie at time $t$ increases the probability of having another tie at a time
later than $t$. More precisely, \cite{Antal10} shows that for CA$_{=}$,
\begin{equation}
\P_{\CA,1}^{(x_0,y_0)}[T > t | X_t = Y_t] \simeq 1-\frac{1}{\sqrt{\pi X_t}}.
\label{eq:P}
\end{equation}
Since $X_t \sim t/2$ at a tie, Eq.~\eqref{eq:P} implies that the later a tie occurs, the more likely another tie will 
occur, intuitively explaining why long-lasting competitions are also intense in this case.

\prettyref{fig:T-N} shows a scatter-plot of duration versus intensity from $10^4$ independent runs of CA$_{=}$ competitions with $x_0 = y_0 = 1$, each simulated for $10^9$ time steps. This unveils a strong positive correlation between the two statistics in log-log 
scale (sample Pearson correlation coefficient of 0.94).

\begin{figure}[htb]
\centering
\includegraphics[width=\figwidth]{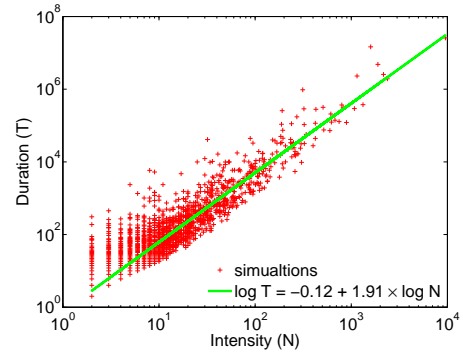}
\caption{Scatter plot of duration vs. intensity for CA$_{=}$. 
}
\label{fig:T-N}
\end{figure}

Interestingly, CA$_{\neq}$ shows a different behavior, since even long-lasting competitions exhibit only a
small number of ties.  \prettyref{fig:EN} shows simulation results for conditional average intensities 
of competitions with $x_0 = y_0 = 1$ and different fitness ratios $r$, conditioned on the
duration being at least $t$. Each simulation curve is obtained from $10^4$ independent runs for $10^9$ time steps each. Note that for $r=1$, the conditional average intensity increases linearly with $t$, but for $r>1$, 
it stabilizes as $t$ increases.
Again, we observe a sharp transition as we move from identical to distinct fitnesses, this time in
the correlation between intensity and duration.

\begin{figure}[htb]
\centering
\includegraphics[width=0.7\columnwidth]{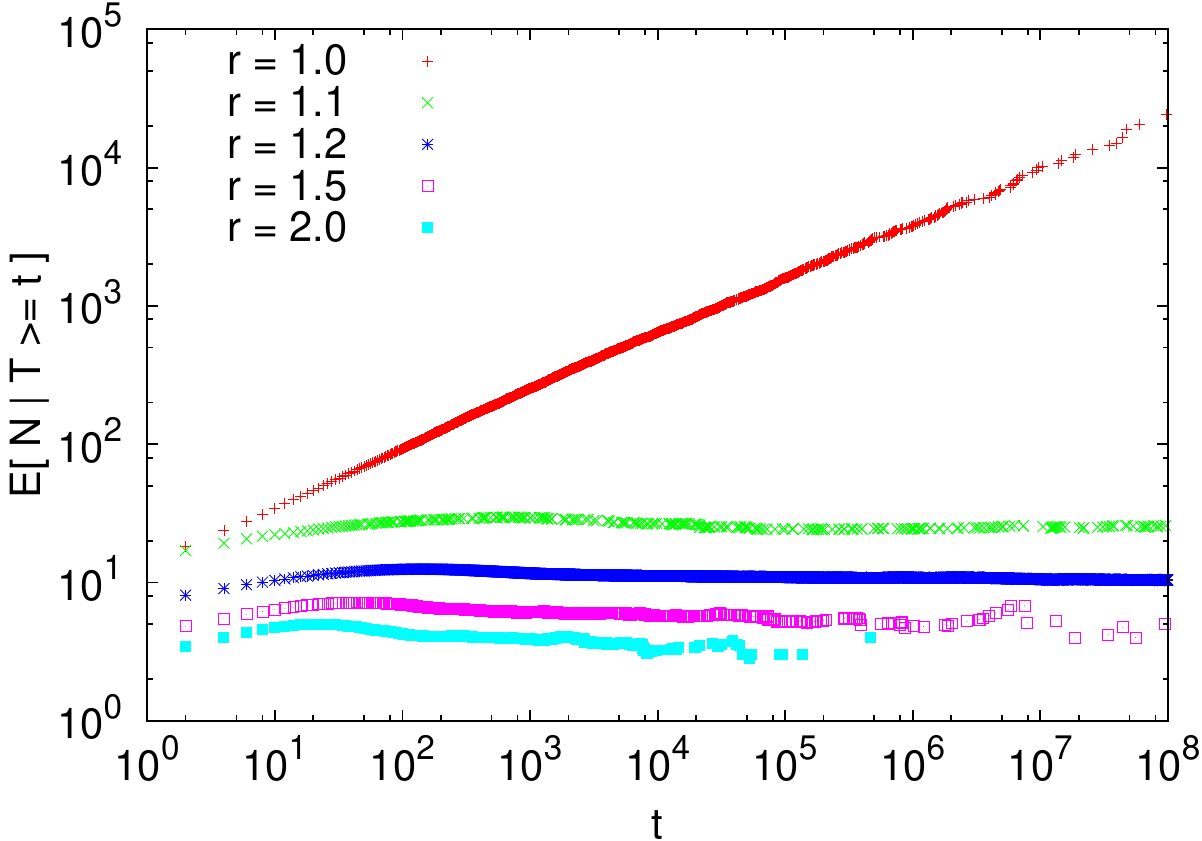}
\caption{Conditional average intensity of competitions conditioned on their duration being at least $t$,
namely $\E_{\CA,r}^{(x_0,y_0)}[N\mid T \geq t]$.
}
\label{fig:EN}
\end{figure}


\section{Proofs} 
\label{sec:proofs}

In this section, we provide proofs for our main results in Section \ref{sec:results}. We will use the following additional notations and definitions.

\begin{itemize}
\item Denote the transition probability from state $(x_0,y_0)$ to state $(x,y)$ in $t=x+y-x_0 - y_0$ steps by
\[
p_r(x_0,y_0;x,y) = \P_{\CA,r}^{(x_0,y_0)}[(X_t,Y_t)=(x,y)].
\]
Note that the transition probability is nonzero only for this specific $t$. Thus we will often omit to mention $t$ explicitly hereafter and assume that the appropriate $t$ has been chosen.

\item
Let $\tau_n$ be the time of the $n$-th tie, which can be defined recursively by $\tau_0 = -\infty$ and
\[
\tau_n  = \inf\{t>\tau_{n-1}:X_t = Y_t\}, \; n \geq 1.
\]
Note that $T = \tau_N$.

\item
Denote by $q_r(x_0,y_0)$ the probability of having no tie after leaving state $(x_0,y_0)$, i.e.,
\[
q_r(x_0,y_0) = \P_{\CA,r}^{(x_0,y_0)}[X_t\neq Y_t, t\geq 1].
\]
Note that $q_r(x_0,x_0) =  \P_{\CA,r}^{(x_0,x_0)}[\tau_2 = \infty]$ and $q_r(x_0,y_0) =  \P_{\CA,r}^{(x_0,y_0)}[\tau_1 = \infty]$ for $x_0\neq  y_0$.

\item
Denote by $A_{n,t}(x,y)$ the set of paths that start from $(x,y)$ at time 0 and end with the $n$-th tie at time $t$, i.e., $\tau_n = t$.
\end{itemize}

\subsection{Proof of \prettyref{thm:duration-equal}}
\label{subsec:proof-duration-equal}

Note that the CA$_{=}$ model is the standard P\'olya urn model. The proof of \prettyref{thm:duration-equal} combines known results for this model. 
Starting from the initial state $(x_0,y_0)$, $X_t$ has a beta-binomial distribution with parameters $x_0$ and $y_0$ \cite{johnson1992univariate}. Note that the event $X_t = Y_t$ occurs only if $t = |x_0 -y_0|+2k$ for some integer $k\geq 0$. For such $t$, $X_t = Y_t$ if and only if $X_t = z_0+k$, where $z_0 = \max\{x_0,y_0\}$. By Eq.~(6.27) of \cite{johnson1992univariate},
\begin{align}\label{eq:tie}
\P_{\CA,1}^{(x_0,y_0)}[X_t=Y_t]&=\P_{\CA,1}^{(x_0,y_0)}[X_t = z_0+k]\nonumber\\
&=  \frac{B(z_0+k,z_0+k)}{B(x_0,y_0)}\binom{t}{k}.
\end{align}
Recall that $q_1(x,y)$ is the probability of having no tie after leaving state $(x,y)$. Thus
\begin{equation}\label{eq:last_tie}
\P_{\CA,1}^{(x_0,y_0)}[T=t] = \P_{\CA,1}^{(x_0,y_0)}[X_t = Y_t] \cdot q_1(z_0+k,z_0+k),
\end{equation}
where the second factor on the right-hand side is the probability of having no tie after $t$.

Recall that the exit probability $E(x,y)$ in \cite{Antal10} is the probability of ever having a tie starting from $(x,y)$, including the initial state $(x,y)$. Thus for $x\neq y$, $q_1(x,y)$ is related to $E(x,y)$ by
\[
q_1(x,y) = 1 - E(x,y).
\]
Using Eq.~(22) of \cite{Antal10} for $E(x,y)$, we obtain
\[
q_1(x+1,x) = q_1(x,x+1) = \frac{\Gamma(x+1/2)}{\Gamma(x+1)\Gamma(1/2)}.
\]
However, $q_1(x,x)\neq E(x,x)=1$. By considering the one-step transition from $(x,x)$ to $(x+1,x)$ or $(x,x+1)$, we obtain
\[
q_1(x,x) = \frac{1}{2} q_1(x+1,x) + \frac{1}{2} q_1(x,x+1)  = \frac{\Gamma(x+1/2)}{\Gamma(x+1)\Gamma(1/2)}. 
\]
Eliminating $\Gamma(x+1/2)$ by the identity 
\[
\Gamma(2x)=\pi^{-1/2}2^{2x-1}\Gamma(x)\Gamma(x+1/2)
\]
in \cite[Eq.~(5.5.5)]{olver2010nist}, and using $\Gamma(x+1)=x\Gamma(x)$ and $\Gamma(1/2)=\sqrt{\pi}$, we obtain
\begin{equation}\label{eq:escape}
q_1(x,x)=\frac{\Gamma(2x)}{x 2^{2x-1}\Gamma(x)\Gamma(x)}=\frac{1}{x2^{2x-1} B(x,x)}.
\end{equation}

Substitution of Eqs.~\eqref{eq:tie} and \eqref{eq:escape} into \prettyref{eq:last_tie} yields
\[
\P_{\CA,1}^{(x_0,y_0)}[T=t] = \frac{1}{B(x_0,y_0)} \cdot \frac{1}{(z_0+k) 2^{2k + 2z_0 - 1}} \binom{t}{k}.
\]
For $t=|x_0-y_0| + 2k$, Stirling's formula yields
\[
\binom{t}{k} \sim \sqrt{\frac{2}{\pi}} t^{-1/2} 2^t,
\]
and hence
\begin{equation}\label{eq:duration-density-equal}
\P_{\CA,1}^{(x_0,y_0)}[T=t] \sim  \frac{1}{2^{x_0 + y_0 - 5/2} \sqrt{\pi}B(x_0,y_0)} t^{-3/2}.
\end{equation}
It is well-known that as $t\to \infty$, $X_t / (X_t + Y_t)$ converges almost surely to a beta random variable $V$. It follows that $|X_t - Y_t| / (X_t + Y_t) \to |2V-1|$. Thus, for $V\neq 1/2$, which holds almost surely, we have $|X_t-Y_t| / (X_t + Y_t) > 0$ for all large enough $t$. Therefore, $\P_{\CA,1}^{(x_0,y_0)}[T=\infty] = 0$.
Summing over $t$ in \prettyref{eq:duration-density-equal}, we obtain as $t\to \infty$,
\begin{align*}
\P_{\CA,1}^{(x_0,y_0)}[T\geq t] & = \sum_{t' = t}^\infty \P_{\CA,1}^{(x_0,y_0)}[T=t'] \\
&\sim \frac{1}{2} \sum_{s=t}^\infty\frac{1}{2^{x_0 + y_0 - 5/2} \sqrt{\pi}B(x_0,y_0)} s^{-3/2}\\
&\sim \frac{1}{2} \int_t^\infty \frac{1}{2^{x_0 + y_0 - 5/2} \sqrt{\pi}B(x_0,y_0)} s^{-3/2} ds \\
&=  \frac{1}{2^{x_0 + y_0 -5/ 2} \sqrt{\pi}B(x_0,y_0)} t^{-1/2},
\end{align*} 
where we have used the fact that half of the terms are zero in the second step, and $\sum_{s=t}^\infty s^{-a} \sim \int_t^\infty s^{-a}ds$ in the third step. This completes the proof of \prettyref{thm:duration-equal}.

\subsection{Proof of \prettyref{thm:duration-unequal}}
\label{subsec:proof-duration-unequal}

Similar to \prettyref{eq:last_tie}, we have
\begin{equation}\label{eq:last_tie_unequal}
\P_{\CA,r}^{(x_0,y_0)}[T=t] = p_r(x_0,y_0;z_0+k,z_0+k) \cdot q_r(z_0+k,z_0+k).
\end{equation}
Thus the proof here amounts to finding expressions for both $p_r(x_0,y_0;z_0+k,z_0+k)$ and $q_r(z_0+k,z_0+k)$ in CA$_{\neq}$. We break the proof into three lemmas.

\begin{lemma}\label{lem:lower_tie}
\begin{equation}\label{eq:lower_tie}
p_r(x_0,y_0;x_0+k,y_0+h) \geq \frac{(x_0)_k (y_0)_h}{(rx_0+y_0)_{k+h}} \binom{k+h}{k},
\end{equation}
for all $k\geq 0, h\geq 0$.
\end{lemma}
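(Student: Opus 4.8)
The plan is to prove the pointwise bound $p_r(x_0,y_0;x_0+k,y_0+h)\ge L(k,h)$, where $L(k,h)$ denotes the right-hand side of \eqref{eq:lower_tie}, by strong induction on $n=k+h$. The case $n=0$ is immediate. On the boundary I would argue directly. If $k=0$, the chain must take $h$ consecutive $Y$-steps to reach $(x_0,y_0+h)$, and the resulting telescoping product of transition probabilities is $\prod_{j=0}^{h-1}\frac{y_0+j}{rx_0+y_0+j}=L(0,h)$, with equality. If $h=0$, the chain takes $k$ consecutive $X$-steps, and since $\frac{r(x_0+i)}{r(x_0+i)+y_0}\ge\frac{x_0+i}{rx_0+y_0+i}$ for each $i$ — which after cross-multiplying is just $(r-1)(rx_0+y_0)\ge 0$ — the product dominates $L(k,0)=\prod_{i=0}^{k-1}\frac{x_0+i}{rx_0+y_0+i}$.

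For the inductive step with $k\ge 1$ and $h\ge 1$, I would condition on the last step into $(x_0+k,y_0+h)$ and use the Markov property to write
\[
p_r(x_0,y_0;x_0+k,y_0+h) = p_r(x_0,y_0;x_0+k-1,y_0+h)\,\frac{r(x_0+k-1)}{D_1} + p_r(x_0,y_0;x_0+k,y_0+h-1)\,\frac{y_0+h-1}{D_2},
\]
with $D_1 = r(x_0+k-1)+y_0+h$ and $D_2 = r(x_0+k)+y_0+h-1$. Applying the induction hypothesis to the two predecessor probabilities and using the identities $(a)_m/(a)_{m-1}=a+m-1$, $\binom{m-1}{j-1}/\binom{m}{j}=j/m$ and $\binom{m-1}{j}/\binom{m}{j}=(m-j)/m$ to rewrite $L(k-1,h)$ and $L(k,h-1)$ as multiples of $L(k,h)$, the inequality $p_r(x_0,y_0;x_0+k,y_0+h)\ge L(k,h)$ reduces to the elementary claim
\[
\frac{M}{k+h}\left(\frac{kr}{D_1}+\frac{h}{D_2}\right)\ge 1, \qquad M := rx_0+y_0+k+h-1.
\]

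To close this, I would note that $D_2-D_1=r-1\ge 0$, so $\frac{kr}{D_1}+\frac{h}{D_2}\ge\frac{kr+h}{D_2}$, which leaves only $M(kr+h)\ge(k+h)D_2$ to verify. Writing $D_2=M+(r-1)k$ and $kr+h=(k+h)+(r-1)k$, the difference $M(kr+h)-(k+h)D_2$ collapses to $(r-1)k\,(rx_0+y_0-1)$, which is nonnegative since $r>1$, $k\ge 1$, and $rx_0+y_0\ge 2$. I expect the only real obstacle to be the bookkeeping in this inductive step: one must bound the smaller denominator $D_1$ from above by $D_2$ (not the reverse) for the residual inequality to telescope, and the boundary cases $k=0$ and $h=0$ have to be peeled off separately because there one of the two predecessor states does not exist. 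A more naive route — lower-bounding every path from $(x_0,y_0)$ to $(x_0+k,y_0+h)$ by its worst case — does not work, since the worst-path denominator overshoots $r^k(rx_0+y_0)_{k+h}$ once $h$ is large, so it is important that the recursion redistributes the mass across paths rather than treating them uniformly.
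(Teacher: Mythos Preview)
Your proposal is correct and follows essentially the same approach as the paper's proof: induction on $k+h$, direct verification of the boundary rays $k=0$ (equality) and $h=0$ (per-factor inequality), and in the inductive step the Chapman--Kolmogorov recursion together with the replacement of the smaller denominator $D_1$ by $D_2$ before reducing to the inequality $M(kr+h)\ge(k+h)D_2$. The paper states this last inequality as $\frac{(rk+h)(k+h+c_0-1)}{(k+h)(rk+h+c_0-1)}\ge 1$ and observes it is equivalent to $r\ge 1$; your computation of the difference as $(r-1)k(rx_0+y_0-1)$ is the same fact unpacked (and indeed covers $r=1$ as well, not only $r>1$).
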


\begin{lemma}\label{lem:upper_tie}
\begin{equation}\label{eq:upper_tie}
p_r(x_0,y_0;x_0+k,y_0+h) \leq \frac{(x_0)_k (y_0)_h}{(r^{-1})_k (rx_0+y_0)_h} ,
\end{equation}
for all $k\geq 0, h\geq 0$.
\end{lemma}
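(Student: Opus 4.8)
The plan is to prove Lemma~\ref{lem:upper_tie} by induction on $m=k+h$, using the \emph{backward} (last-step) Chapman--Kolmogorov recursion, with the inductive step collapsing to an elementary inequality. Write $U(k,h)=\frac{(x_0)_k(y_0)_h}{(r^{-1})_k(rx_0+y_0)_h}$ for the claimed bound and $P(k,h)=p_r(x_0,y_0;x_0+k,y_0+h)$; we must show $P(k,h)\le U(k,h)$ for all $k,h\ge 0$. For $m=0$ we have $P(0,0)=U(0,0)=1$. On the boundary $k=0$ only $Y$-steps occur, so by \eqref{eq:CA-transition-prob}, $P(0,h)=\prod_{j=0}^{h-1}\frac{y_0+j}{rx_0+y_0+j}=\frac{(y_0)_h}{(rx_0+y_0)_h}=U(0,h)$, an equality. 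On the boundary $h=0$, $P(k,0)=\prod_{j=0}^{k-1}\frac{r(x_0+j)}{r(x_0+j)+y_0}$, and a term-by-term comparison with $U(k,0)=\prod_{j=0}^{k-1}\frac{r(x_0+j)}{1+jr}$ reduces to $r(x_0+j)+y_0\ge 1+jr$, i.e.\ $rx_0+y_0\ge 1$, which holds since $x_0,y_0\ge 1$.

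\textbf{Inductive step.} For $k,h\ge 1$, conditioning on the state immediately before the last step gives
\[
P(k,h)=P(k-1,h)\,\frac{r(x_0+k-1)}{r(x_0+k-1)+y_0+h}+P(k,h-1)\,\frac{y_0+h-1}{r(x_0+k)+y_0+h-1}.
\]
By the induction hypothesis $P(k-1,h)\le U(k-1,h)$ and $P(k,h-1)\le U(k,h-1)$, and from the Pochhammer recursion $U(k-1,h)=U(k,h)\frac{r^{-1}+k-1}{x_0+k-1}$ and $U(k,h-1)=U(k,h)\frac{rx_0+y_0+h-1}{y_0+h-1}$. Hence it suffices that the corresponding two coefficients of $U(k,h)$ sum to at most $1$. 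Setting $A=r(x_0+k-1)+y_0+h$ and $c=r(k-1)\ge 0$, the first coefficient simplifies to $\frac{1+c}{A}$, while the second denominator equals $A+r-1$ and its numerator equals $A-c-1$; so the sum is $\frac{1+c}{A}+\frac{A-c-1}{A+r-1}$. Clearing denominators, $\frac{1+c}{A}+\frac{A-c-1}{A+r-1}\le 1$ collapses to $(1+c)(r-1)\le A(r-1)$, which follows from $r-1\ge 0$ together with $A=r(k-1)+rx_0+y_0+h\ge 1+c$, again because $rx_0+y_0\ge 1$. The cases with exactly one of $k,h$ positive drop the absent term and are already covered by the boundary analysis.

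\textbf{Main obstacle and remarks.} The one genuinely load-bearing choice is to use the last-step recursion rather than the first-step one: with the last-step version both sub-terms share the common initial state $(x_0,y_0)$, so $U$'s dependence on $x_0,y_0$ is unchanged and the bound telescopes cleanly; the first-step version would shift the first two arguments of $U$ and tangle the algebra. The only real computation is the simplification in the inductive step, which is the most error-prone part but, as sketched, reduces to the trivial $rx_0+y_0\ge 1$. I will also note that the argument is valid verbatim for $r=1$ (both sides of $(1+c)(r-1)\le A(r-1)$ vanish), consistent with the beta-binomial computation used in the proof of Theorem~\ref{thm:duration-equal}, so no separate equal-fitness case is needed.
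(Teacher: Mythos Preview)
Your proof is correct and follows essentially the same approach as the paper: induction on $k+h$ via the last-step Chapman--Kolmogorov recursion \eqref{eq:recursion}, with the inductive step reducing to showing that the two coefficients of $U(k,h)$ sum to at most $1$. Your algebraic reduction to $(1+c)(r-1)\le A(r-1)$ is equivalent to the paper's step of replacing the denominator $rk+h+c_0-1$ by the smaller $rk+h+c_0-r$ to make the sum collapse to $1$.
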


\begin{lemma} \label{lem:asymp_escape}
For $r>1$,
\begin{equation}\label{eq:asymp_escape}
q_r(x,x) \to \frac{r-1}{r+1},
\end{equation}
as $x\to\infty$.
\end{lemma}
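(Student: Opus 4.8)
The plan is to exploit that, when $x$ is large, the difference process $D_t:=X_t-Y_t$ behaves near the diagonal like a biased random walk of up-probability $\frac{r}{r+1}$, whose probability of never returning to its starting value is $1-\frac{2}{r+1}=\frac{r-1}{r+1}$, and to convert this into rigorous bounds by coupling $D_t$ with honest biased walks together with the classical gambler's-ruin identities. Writing $S_t=X_t+Y_t$, the probability of giving the new unit to $X$ from a state with $D_t=d$, $S_t=s$ is $p(d,s)=\frac{r(s+d)}{(r+1)s+(r-1)d}$, and a direct computation shows this is nondecreasing in $d$, nonincreasing in $s$ for $d>0$, nondecreasing in $r$ for $|d|<s$, satisfies $p(d,s)\ge p(0,s)=\frac{r}{r+1}$ for $d\ge 0$, and tends to $\frac{r}{r+1}$ as $s\to\infty$ with $d$ fixed. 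I first reduce to an off-diagonal quantity: conditioning on the first step out of $(x,x)$ gives $q_r(x,x)=\frac{r}{r+1}q_r(x+1,x)+\frac{1}{r+1}q_r(x,x+1)$, and the last term is negligible (indeed zero, since from $(x,x+1)$ ``no tie ever'' forces $Y_t>X_t$ for all $t$, which a.s.\ fails because the fitter agent eventually overtakes~\cite{Mahm08}; alternatively $q_r(x,x+1)\le q_1(x,x+1)=\frac{\Gamma(x+1/2)}{\Gamma(x+1)\Gamma(1/2)}\to 0$ by a monotone coupling in $r$, legitimate since $|D_t|\le 1+t<S_t$ always, and \prettyref{thm:duration-equal}). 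Hence it suffices to prove $q_r(x+1,x)\to\frac{r-1}{r}$, after which $q_r(x,x)\to\frac{r}{r+1}\cdot\frac{r-1}{r}=\frac{r-1}{r+1}$.

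For the lower bound $q_r(x+1,x)\ge\frac{r-1}{r}$, valid for every $x$, I couple the $D$-process started at $1$ with a biased walk $R$ of up-probability $\frac{r}{r+1}$ started at $1$, using common uniforms. Since $p(d,s)\ge\frac{r}{r+1}$ whenever $d\ge 1$, the increments of $D$ stochastically dominate those of $R$ as long as $D\ge 1$, so if $D$ ever hits $0$ then so does $R$; therefore $q_r(x+1,x)\ge\P[R\text{ never hits }0]=1-\frac{1/(r+1)}{r/(r+1)}=\frac{r-1}{r}$, the middle term being the classical escape probability of a biased walk from $1$.

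For the matching bound $\limsup_{x\to\infty}q_r(x+1,x)\le\frac{r-1}{r}$, I fix a large integer $M$. Before any tie one has $S_t\ge 2x$, so on the strip $1\le D\le M$ the monotonicity of $p$ gives $p(d,s)\le p(M,2x)=:\frac{r}{r+1}+\varepsilon_x$ with $\varepsilon_x=O(M/x)\to 0$ for fixed $M$ (this is exactly where nonincreasingness in $s$ for $d>0$ is used, so that $s=2x$ is the worst case). Coupling the $D$-process started at $1$ with a biased walk $R^{+}$ of up-probability $\frac{r}{r+1}+\varepsilon_x$ started at $1$, the increments of $D$ are now dominated by those of $R^{+}$ while $D\le M$, and a short barrier argument yields $\{R^{+}\text{ hits }0\text{ before }M+1\}\subseteq\{D\text{ hits }0\}$. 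Gambler's ruin with ruin ratio $\theta_x=\frac{1/(r+1)-\varepsilon_x}{r/(r+1)+\varepsilon_x}$ then gives $q_r(x+1,x)\le 1-\frac{\theta_x-\theta_x^{M+1}}{1-\theta_x^{M+1}}$; letting $x\to\infty$ (so $\theta_x\to 1/r$) and then $M\to\infty$ gives $\limsup_{x\to\infty}q_r(x+1,x)\le 1-\frac1r=\frac{r-1}{r}$, which together with the lower bound closes the argument.

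I expect the main obstacle to be the upper bound. The CA transition probability is spatially inhomogeneous in \emph{both} coordinates, and $S_t$ grows without bound even while $D_t$ stays small, so one must verify that the deviation of $p$ from $\frac{r}{r+1}$ is \emph{uniformly} small over the relevant strip, and one must introduce the artificial absorbing barrier at $M+1$ to keep the comparison walk $R^{+}$ honest, taking $M\to\infty$ only at the end to remove the resulting slack. A secondary point is that the reduction step relies on $q_r(x,x+1)\to 0$, which in turn needs either the equal-fitness computation from \prettyref{thm:duration-equal} or the external fact that the fitter agent wins almost surely.
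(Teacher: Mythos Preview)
Your proof is correct and takes a genuinely different route from the paper's. The paper argues by pathwise convergence: for each fixed finite path $\tilde\pi$ ending with the $n$-th tie at time $2k$, the CA probability of its translate $x+\tilde\pi$ converges to the RW probability of $\tilde\pi$ as $x\to\infty$; summing over paths and then over $k$ (invoking the Dominated Convergence Theorem) gives $\P_{\CA,r}^{(x,x)}[\tau_2<\infty]\to\P_{\RW,r}^{(0,0)}[\tau_2<\infty]=\frac{2}{r+1}$, hence $q_r(x,x)\to\frac{r-1}{r+1}$. Your approach instead reduces to $q_r(x+1,x)$ via the exact identity $q_r(x,x+1)=0$ (which the paper does not exploit), and then sandwiches the difference process between two honest biased walks by monotone coupling, closing the upper bound with gambler's ruin against an auxiliary barrier at $M+1$. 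The paper's argument is shorter and identifies the limit conceptually as the RW escape probability, but it leaves the dominating function required for the DCT step implicit. Your argument is more hands-on but fully elementary and self-contained: it delivers the non-asymptotic lower bound $q_r(x+1,x)\ge\frac{r-1}{r}$ valid for \emph{every} $x$ (hence $q_r(x,x)\ge\frac{r-1}{r+1}$ for every $x$), and the two-step limit $x\to\infty$ then $M\to\infty$ makes all approximations explicit with a visible rate $\varepsilon_x=O(M/x)$. One cosmetic remark: the clause ``before any tie one has $S_t\ge 2x$'' is harmless but unnecessary, since $S_t\ge S_0=2x+1$ for all $t$ once you have stepped to $(x+1,x)$.
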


Before proving these lemmas, we first use them to prove \prettyref{thm:duration-unequal}. 

\begin{proof}[of \prettyref{thm:duration-unequal}]
By \prettyref{lem:lower_tie}, we have
\begin{align*}
& p_r(x_0,y_0;z_0+k,z_0+k) \\
\geq\; & \frac{(x_0)_{k+z_0-x_0} (y_0)_{k+z_0-y_0}}{(rx_0+y_0)_{2k+2z_0-x_0-y_0}}\binom{2k+2z_0-x_0-y_0}{k+z_0-x_0} \\
=\;& \frac{\Gamma(rx_0+y_0)}{\Gamma(x_0)\Gamma(y_0)} \times \frac{\Gamma(k+z_0)\Gamma(k+z_0)}{\Gamma(k+z_0-x_0+1)\Gamma(k+z_0-y_0+1)}\\
&\times \frac{\Gamma(2k+2z_0-x_0-y_0+1)}{\Gamma(2k+2z_0+(r-1)x_0)}
\end{align*}
Using the relation $\Gamma(k+a)/\Gamma(k+b)\sim k^{a-b}$ as $k\to\infty$, we obtain
\begin{align*}
p_r(x_0,y_0;z_0+k,z_0+k) & \gtrsim \frac{\Gamma(rx_0+y_0)}{2^{x_0+y_0-2} \Gamma(x_0)\Gamma(y_0)} (2k)^{-(r-1)x_0-1}\\
&=2(r+1)x_0 \varphi_1 (2k)^{-(r-1)x_0-1},
\end{align*}
where $\varphi_1$ is given by \prettyref{eq:phi1}. Application of this asymptotic bound and  \prettyref{lem:asymp_escape} to \prettyref{eq:last_tie_unequal} yields
\begin{equation}\label{eq:duration-density-unequal}
\P_{\CA,r}^{(x_0,y_0)}[T=t] \gtrsim 2(r-1)x_0 \varphi_1 t^{-(r-1)x_0-1}.
\end{equation}
Note that $\P_{\CA,r}^{(x_0,y_0)}[T=+\infty]=\P_{\CA,r}^{(x_0,y_0)}[N=+\infty]=0$, where the second equality will follow from \prettyref{thm:intensity-unequal}, so we will not provide a separate proof here.
Summing over $t$ in \prettyref{eq:duration-density-unequal} and noting that half of the terms are zero, we obtain as $t\to \infty$,
\begin{align*}
\P_{\CA,r}^{(x_0,y_0)}[T\geq t] & = \sum_{t' = t}^\infty \P_{\CA,r}^{(x_0,y_0)}[T=t'] \\
&\gtrsim \int_t^\infty (r-1)x_0 \varphi_1 s^{-(r-1)x_0-1} ds \\
&=  \varphi_1 t^{-(r-1)x_0},
\end{align*} 
establishing the lower bound.

In a similar way, \prettyref{lem:upper_tie} yields
\begin{align*}
& p_r(x_0,y_0;z_0+k,z_0+k) \\
\lesssim\;  & 2(r+1) (x_0 - r^{-1}) \varphi_2 (2k)^{-(r-1)(x_0-r^{-1})-1},
\end{align*}
where $\varphi_2$ is given by \prettyref{eq:phi2}.
Application of this asymptotic bound and \prettyref{lem:asymp_escape} to \prettyref{eq:last_tie_unequal} yields
\[
\P_{\CA,r}^{(x_0,y_0)}[T=t] \lesssim  2(r-1) (x_0 - r^{-1}) \varphi_2 t^{-(r-1)(x_0-r^{-1})-1},
\]
and
\[
\P_{\CA,r}^{(x_0,y_0)}[T\geq t]  = \sum_{t' = t}^\infty \P_{\CA,r}^{(x_0,y_0)}[T=t'] \lesssim \varphi_2 t^{-(r-1)(x_0-r^{-1})},
\]
establishing the upper bound. 
\end{proof}

Now we prove the lemmas.
Recall that the transition probability $p(x_0,y_0;x,y)$ of going from $(x_0,y_0)$ to $(x,y)$ satisfies the following recursion (Chapman-Kolmogorov equation),
\begin{equation}\label{eq:recursion}
\begin{aligned}
p_r(x_0,y_0;x,y) =\;& \frac{r(x-1)}{r(x-1)+y} p_r(x_0,y_0;x-1,y) \\
&+ \frac{y-1}{rx+y-1} p_r(x_0,y_0;x,y-1),
\end{aligned}
\end{equation}
for $x\geq x_0$, $y\geq y_0$ and $x+y\geq x_0+y_0+1$, with the boundary condition $p_r(x_0,y_0;x,y) = 0$ for $x<x_0$ or $y<y_0$. Note that we have replaced the one-step transition probabilities $Q_{\CA,r}(x-1,y;x,y)$ and $Q_{\CA,r}(x,y-1;x,y)$ by the expressions in \prettyref{eq:CA-transition-prob}.

\begin{proof}[of Lemma \ref{lem:lower_tie}]
We will use the short-hand notation $p(k,h)$ for $p_r(x_0,y_0; x_0+k,y_0+h)$, and $\psi(k,h)$ for the right-hand side of \prettyref{eq:lower_tie}.
We first prove the boundary case for $k = 0$.
By \prettyref{eq:recursion}, for $h\geq 1$,
\[
p(0,h) = \frac{y_0+h-1}{rx_0+y_0+h-1} p(0,h-1) ,
\]
which is a simple recursion in $h$ and can be expanded to yield
\[
p(0,h) = \frac{(y_0)_h}{(rx_0+y_0)_h} p(0,0)= \frac{(y_0)_h}{(rx_0+y_0)_h} = \psi(0,h),
\]
which yields \prettyref{eq:lower_tie} for $k=0$ and $h\geq 1$. Here we have used $p(0,0) = p_r(x_0,y_0;x_0,y_0) = 1$.

Similarly, for the other boundary case $h=0,k\geq 1$, we have
\begin{align*}
p(k,0) &= \frac{(x_0)_k}{(x_0+r^{-1}y_0)_k} p(0,0)= \frac{(x_0)_k}{(x_0+r^{-1}y_0)_k} \\
&\geq \frac{(x_0)_k}{(rx_0+y_0)_k} = \psi(k,0),
\end{align*}
where the last inequality is because $(x)_k$ increases with $x$, and $x_0+r^{-1}y_0\leq rx_0 + y_0$.

For the general case, we use induction on $k+h$. The base case $k+h=1$ is already proven, since either $k=0$ or $h=0$ when $k+h=1$. Assume \prettyref{eq:lower_tie} holds for $k+h=m\geq 1$. Consider $k+h=m+1$. We can also assume $k\geq 1$ and $h\geq 1$, since we have proven the boundary cases for $k=0$ or $h=0$. The recursion in \prettyref{eq:recursion} yields
\begin{align*}
&p(k,h) \\
=\;& \frac{r(x_0+k-1)}{rk+h+c_0-r} p(k-1,h) + \frac{y_0+h-1}{rk+h+c_0-1}p(k,h-1) \\
\geq\;& \frac{r(x_0+k-1)}{rk+h+c_0-1} p(k-1,h) + \frac{y_0+h-1}{rk+h+c_0-1}p(k,h-1),
\end{align*}
where $c_0 = rx_0+y_0$.

Applying the induction hypothesis $p(k-1,h)\geq\psi(k-1,h)$ and $p(k,h-1)\geq\psi(k,h-1)$ to the above inequality  yields
\begin{align*}
& p(k,h)  \\
\geq\;& \frac{r(x_0+k-1)}{rk+h+c_0-1} \psi(k-1,h) + \frac{y_0+h-1}{rk+h+c_0-1} \psi(k,h-1) \\
=\;& \frac{(rk+h)(k+h+c_0-1)}{(k+h)(rk+h+c_0-1)}  \psi(k,h),
\end{align*}
where in the last step we have used
\[
\psi(k-1,h) = \frac{k}{k+h} \cdot \frac{k+h+c_0-1}{x_0+k-1} \psi(k,h),
\]
and
\[
\psi(k,h-1) = \frac{h}{k+h} \cdot \frac{k+h+c_0-1}{y_0+h-1} \psi(k,h).
\]
To complete the proof, it suffices to show that
\[
\frac{(rk+h)(k+h+c_0-1)}{(k+h)(rk+h+c_0-1)} \geq 1,
\]
but this is equivalent to $r\geq 1$, which is true by assumption.
\end{proof}

\begin{proof}[of Lemma \ref{lem:upper_tie}]
The proof of \prettyref{lem:upper_tie} follows the same line of reasoning as that used to prove \prettyref{lem:lower_tie}. The boundary cases can be verified directly. We only outline the induction step here.  Applying \prettyref{eq:upper_tie} to the right-hand side of \prettyref{eq:recursion} yields
\begin{align*}
p(k,h) \leq\;&  \frac{r(x_0+k-1)}{rk+h+c_0-r} \frac{(x_0)_{k-1} (y_0)_h}{(r^{-1})_{k-1} (c_0)_h} \\
&+ \frac{y_0+h-1}{rk+h+c_0-1}\frac{(x_0)_k (y_0)_{h-1}}{(r^{-1})_k (c_0)_{h-1}}\\
=\;&\left[\frac{r(r^{-1}+k-1)}{rk+h+c_0-r}  + \frac{c_0+h-1}{rk+h+c_0-1} \right] \frac{(x_0)_k (y_0)_{h}}{(r^{-1})_k (c_0)_{h}}.
\end{align*}
Note that
\begin{align*}
&\frac{r(r^{-1}+k-1)}{rk+h+c_0-r}  + \frac{c_0+h-1}{rk+h+c_0-1}\\
\leq\; & \frac{r(r^{-1}+k-1)}{rk+h+c_0-r}  + \frac{c_0+h-1}{rk+h+c_0-r} = 1,
\end{align*}
which completes the induction.
\end{proof}

\begin{proof}[of Lemma \ref{lem:asymp_escape}]
Recall that $A_{n,2k}(x,x)$ is the set of paths that start from $(x,x)$ at time 0 and end with the $n$-th tie at time $2k$, i.e., $\tau_n = 2k$. Let $A_{n,2k}=A_{n,2k}(0,0)$. Note that the paths in $A_{n,2k}(x,x)$ are exactly the paths in $A_{n,2k}$ translated by $(x,x)$. Let $\tilde \pi\in A_{n,2k}$ and its state at time $t$ be $\tilde \pi_t = (\tilde x_t,\tilde y_t)$. The translation of $\tilde \pi$ by $(x,x)$, denoted $x+\tilde \pi$, is a path in $A_{n,2k}(x,x)$, whose probability in the CA model is given by
\begin{align*}
\P_{\CA,r}^{(x,x)}[x+\tilde \pi] =\; &\prod_{j=0}^{2k-1} \left(\frac{r(x+\tilde x_j)}{r(x+\tilde x_j) + (x+\tilde y_j)}\right)^{\tilde x_{j+1}-\tilde x_j}\\
&\times \prod_{j=0}^{2k-1} \left(\frac{(x+\tilde y_j)}{r(x+\tilde x_j) + (x+\tilde y_j)}\right)^{\tilde y_{j+1}-\tilde y_j}.
\end{align*}
For fixed $k$ and $\tilde \pi$, as $x\to \infty$, $\P_{\CA,r}^{(x,x)}[x+\tilde \pi]$ converges to
\[
\prod_{j=0}^{2k-1} \left(\frac{r}{r+1}\right)^{\tilde x_{j+1}-\tilde x_j}\left(\frac{1}{r+1}\right)^{\tilde y_{j+1}-\tilde y_j} = \P_{\RW,r}^{(0,0)}[\tilde \pi],
\]
which corresponds to the probability of the path $\tilde \pi$ in a random walk with parameter $r/(r+1)$. Thus, as $x\to\infty$,
\begin{align*}
\P_{\CA,r}^{(x,x)}[\tau_n=2k] &= \sum_{\tilde \pi \in A_{n,2k}} \P_{\CA,r}^{(x,x)}[x+\tilde\pi]\\
&\to \sum_{\tilde \pi \in A_{n,2k}} \P_{\RW,r}^{(0,0)}[\tilde\pi] = \P_{\RW,r}^{(0,0)}[\tau_n=2k].
\end{align*}
After summing over $k$ and using the Dominated Convergence Theorem, we obtain
\begin{align*}
\P_{\CA,r}^{(x,x)}[\tau_n < \infty] &= \sum_{k=1}^\infty  \P_{\CA,r}^{(x,x)}[\tau_n=2k] \\
&\to \sum_{k=1}^\infty  \P_{\RW,r}^{(0,0)}[\tau_n=2k] = \P_{\RW,r}^{(0,0)}[\tau_n < \infty].
\end{align*}
In particular,
\[
q_r(x,x) = 1-\P_{\CA,r}^{(x,x)}[\tau_2 < \infty] \to 1- \P_{\RW,r}^{(0,0)}[\tau_2 < \infty] = \frac{r-1}{r+1},
\]
where we have used Eq.~(3.3) in \cite{RW} for $\P_{\RW,r}^{(0,0)}[\tau_2 < \infty] $ in the last step.
\end{proof}

\subsection{Proof of \prettyref{thm:intensity-equal}}
\label{subsec:proof-intensity-equal}

Recall that $A_{n,t}(x_0,y_0)$ is the set of paths starting from $(x_0,y_0)$ that end with the $n$-th tie at time $t$, i.e., $\tau_n = t$. We will use the short-hand notation $A_{n,t}$ for $A_{n,t}(x_0,y_0)$. As in Section \ref{subsec:proof-duration-equal}, the set $A_{n,t}$ is non-empty only if $t = |x_0-y_0|+2k$ for some integer $k\geq  n -1$, in which case, every path in $A_{n,t}$ ends in state $(z_0+k,z_0+k)$ with $z_0 = \max\{x_0,y_0\}$. Recall from \cite{Antal10} that the probability of any path $\pi$ connecting states $(x_0,y_0)$ and $(x,y)$ is 
\[
\P_{\CA,1}^{(x_0,y_0)}[\pi]=\frac{B(x,y)}{B(x_0,y_0)}=\frac{B(x,y)}{B(x_0,y_0)}2^t \P_{\RW,1}^{(x_0,y_0)}[\pi].
\]
Summing over $\pi \in A_{n,t}$, where $x=y=z_0+k$, we obtain
\begin{equation}\label{eq:nth_tie}
\P_{\CA,1}^{(x_0,y_0)}[\tau_n=t] = \frac{B(z_0+k,z_0+k)}{B(x_0,y_0)} 2^t \P_{\RW,1}^{(x_0,y_0)}[\tau_n=t].
\end{equation}
Thus the probability of having the $n$-th and also the last tie at time $t$ is given by
\begin{align}
&\P_{\CA,1}^{(x_0,y_0)}[T = t, N = n] \nonumber\\
=\;& \P_{\CA,1}^{(x_0,y_0)}[\tau_n=t] \cdot q_1(z_0+k,z_0+k) \nonumber\\
=\;& \frac{1}{2^{x_0+y_0-2}B(x_0,y_0)}\cdot \frac{1}{t+x_0+y_0} \P_{\RW,1}^{(x_0,y_0)}[\tau_n=t]\label{eq:joint}
\end{align}
where we have used Eqs.~\eqref{eq:nth_tie} and \eqref{eq:escape} in the last step.
Note that $\P_{\RW,1}^{(x_0,y_0)}[\tau_n=t]$ is the probability $f_{n,t}(d_0)$ of the $n$-th visit to the origin at time $t$ in a simple symmetric random walk starting from $d_0 = |x_0-y_0|$. Summing over $t$ in \eqref{eq:joint}, we obtain
\begin{align}\label{eq:pmf_N}
&\P_{\CA,1}^{(x_0,y_0)}[N=n] \nonumber\\
=\;& \frac{1}{2^{x_0+y_0-2}B(x_0,y_0)}\sum_{k=n-1}^\infty \frac{1}{2k + d_0+x_0+y_0} f_{n,d_0+2k} (d_0) \nonumber\\
=\;& \frac{1}{2^{x_0+y_0-2}B(x_0,y_0)} G_n(1;d_0),
\end{align}
where
\[
G_n(z;d_0) =  \sum_{k=n-1}^\infty \frac{1}{2k + d_0+x_0+y_0} f_{n,d_0+2k} (d_0) z^{d_0 + 2k}.
\]
To simplify $G_n(z;d_0)$, we have
\begin{align}\label{eq:ode_G}
\frac{d}{dz}[z^{x_0+y_0} G_n(z;d_0)] &= z^{x_0+y_0-1} \sum_{k=n-1}^\infty f_{n,d_0+2k} (d_0) z^{d_0 + 2k} \nonumber\\
&= z^{x_0+y_0-1}  \Phi_n(z;d_0),
\end{align}
where $\Phi_n(z;d_0) = \sum_{k=n-1}^\infty f_{n,d_0+2k} (d_0) z^{d_0 + 2k}$ is the generating function of the probability distribution of the $n$-th visit to the origin in a simple random walk starting from $d_0$. Let $F_1(z)$ be the generating function of the distribution of the time of the first return to the origin in a simple random walk starting from the origin. The standard renewal argument (see e.g. XI.3.d of \cite{fellerOne}) shows that $\Phi_n(z;d_0)$ is given by
\[
\Phi_n(z;d_0) = [\Phi_1(z;1)]^{d_0} [F_1(z)]^{n-1},
\]
where $\Phi_1(z;1)$ and $F_1(z)$ are given by Eqs.~(3.6) and (3.14) of \cite[Chap.~XI]{fellerOne}, respectively. Therefore,
\begin{equation}\label{eq:mgf}
\Phi_n(z;d_0) = z^{-d_0} \left(1-\sqrt{1-z^2}\right)^{n+d_0-1}.
\end{equation}
Substituting \prettyref{eq:mgf} into \prettyref{eq:ode_G} and integrating from 0 to 1 yields
\[
G_n(1;d_0) = \int_0^1 z^{2\min\{x_0,y_0\}-1} \left(1-\sqrt{1-z^2}\right)^{n+d_0-1} dz,
\]
where we have used $x_0+y_0-d_0 = 2\min\{x_0,y_0\}$.
A change of variable $u = \sqrt{1-z^2}$ yields
\[
G_n(1;d_0) = \int_0^1 u (1-u^2)^{\min\{x_0,y_0\}-1} (1-u)^{n+d_0-1} du,
\]
which is upper bounded by
\begin{equation}\label{eq:upper_N}
G_n(1;d_0) \leq \int_0^1 u (1-u)^{n+d_0-1} du = B(2,n+d_0),
\end{equation}
and lower bounded by
\begin{align}\label{eq:lower_N}
G_n(1;d_0) & \geq \int_0^1 u (1-u)^{\min\{x_0,y_0\}-1} (1-u)^{n+d_0-1} du \nonumber\\
& = B\left(2,n+\max\{x_0,y_0\}-1\right),
\end{align}
where we have used $\min\{x_0,y_0\}+d_0 = \max\{x_0,y_0\}$.
Applying Eqs.~\eqref{eq:upper_N} and \eqref{eq:lower_N} to \prettyref{eq:pmf_N} yields
\begin{align*}
\frac{B(2,n+\max\{x_0,y_0\}-1)}{2^{x_0+y_0-2} B(x_0,y_0)} &\leq \P_{\CA,1}^{(x_0,y_0)}[N=n] \\
&\leq \frac{B(2,n+d_0)}{2^{x_0+y_0-2} B(x_0,y_0)}. 
\end{align*}
Note that $\P_{\CA,1}^{(x_0,y_0)}[N=\infty] = \P_{\CA,1}^{(x_0,y_0)}[T=\infty]=0$.
Summing over $n$ and using $\sum_{m=n}^\infty B(2,m) = n^{-1}$, we obtain
\begin{align*}
&\frac{1}{2^{x_0+y_0-2} B(x_0,y_0)}\cdot \frac{1}{n+\max\{x_0,y_0\}-1}\\
\leq\;& \P_{\CA,1}^{(x_0,y_0)}[N\geq n] \leq \frac{1}{2^{x_0+y_0-2} B(x_0,y_0)}\cdot \frac{1}{n+d_0},
\end{align*}
which immediately yields \prettyref{eq:intensity-equal}.

\subsection{Proof of \prettyref{thm:intensity-unequal}}
\label{subsec:proof-intensity-unequal}

We first prove the following lemma.
\begin{lemma}\label{lem:eventual_tie}
The probability $\P_{\CA,r}^{(x_0,y_0)}[\tau_1<\infty]$ of ever having a tie is bounded as follows,
\[
\P_{\CA,r}^{(x_0,y_0)}[\tau_1<\infty] \leq  \begin{cases}
1, & x_0 \leq y_0,\\
\frac{(y_0)_{x_0-y_0}}{(rx_0+y_0)_{x_0-y_0}} \left(1+\frac{1}{r}\right)^{x_0 - y_0}, & x_0 > y_0.
\end{cases}
\]
\end{lemma}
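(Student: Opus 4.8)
The plan is to dispose of the case $x_0 \le y_0$ immediately (any probability is at most $1$) and to treat $x_0 > y_0$ by exhibiting an explicit supersolution. Write $g(x,y) = \P_{\CA,r}^{(x,y)}[\tau_1 < \infty]$ for $x \ge y \ge 1$; since starting from a non‑diagonal state the first tie is exactly the first visit to $\{x=y\}$, $g$ is the probability that the monotone lattice walk ever touches the diagonal, and $g(x,x)=1$. Put
\[
h(x,y) \;=\; \frac{(y)_{x-y}}{(rx+y)_{x-y}}\,\Bigl(1+\tfrac1r\Bigr)^{x-y}, \qquad x\ge y,
\]
so that $h(x,x)=1$ and $h(x_0,y_0)$ is precisely the bound claimed in \prettyref{lem:eventual_tie}; it therefore suffices to prove $g\le h$ on $\{x\ge y\}$. (A cruder bound — a union over the first diagonal point hit combined with \prettyref{lem:upper_tie} — gives a constant expressible as a ${}_3F_2$ series that does not even depend on $y_0$, which is why we go for the sharper comparison.)

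First I would record that conditioning on the first step gives, for $x>y$,
\[
g(x,y) \;=\; \frac{rx}{rx+y}\,g(x+1,y) \;+\; \frac{y}{rx+y}\,g(x,y+1),
\]
with $g(x,x)=1$, using \prettyref{eq:CA-transition-prob} and the fact that $\{x\ge y\}$ is invariant with the diagonal absorbing. The crux is to check that $h$ is a \emph{supersolution}: $\frac{rx}{rx+y}h(x+1,y)+\frac{y}{rx+y}h(x,y+1)\le h(x,y)$ for $x>y$. Using elementary Pochhammer identities one finds $h(x,y+1)/h(x,y)=\frac{r(rx+y)}{(r+1)y}$, so the $h(x,y+1)$ term of the supersolution inequality equals exactly $\frac{r}{r+1}h(x,y)$; the remaining inequality $\frac{rx}{rx+y}h(x+1,y)\le\frac{1}{r+1}h(x,y)$ then simplifies, after computing $h(x+1,y)/h(x,y)$ in closed form, to the algebraic statement
\[
(c+g)^2\,(c+1)(c+2)\cdots(c+g-1) \;\le\; (c+r)(c+r+1)\cdots(c+r+g),
\]
where $c=rx+y$, $g=x-y\ge 1$, and I have used $c+g=(r+1)x$ so that $(c+g)^2=(r+1)^2x^2$.

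This last inequality follows from $r\ge 1$ by a shift comparison: termwise $c+r+i\ge c+1+i$ for $0\le i\le g$, so the right‑hand side is at least $(c+1)(c+2)\cdots(c+g+1)=(c+1)\cdots(c+g-1)\cdot(c+g)(c+g+1)$, which is at least the left‑hand side since $c+g+1>c+g$. To upgrade "supersolution" to "$g\le h$" I would truncate: with $g_N(x,y)=\P_{\CA,r}^{(x,y)}[\tau_1\le N]$ we have $g_N\uparrow g$, $g_0=\ind{x=y}\le h$ (as $h\ge 0$ and $h(x,x)=1$), and the one‑step identity for $g_{N+1}$ together with the supersolution property propagates $g_N\le h\Rightarrow g_{N+1}\le h$; letting $N\to\infty$ yields $g\le h$, hence $\P_{\CA,r}^{(x_0,y_0)}[\tau_1<\infty]=g(x_0,y_0)\le h(x_0,y_0)$, which is the claim. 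This lemma then feeds into \prettyref{thm:intensity-unequal} by combining it with a uniform‑in‑$z$ bound $q_r(z,z)\ge\frac{r-1}{r+1}$ on the no‑return probability from a diagonal state, which I will not pursue here.

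I expect the supersolution verification to be the only genuine obstacle — not because the final algebraic inequality is deep, but because one must handle the Pochhammer bookkeeping carefully and, in particular, notice the cancellation that makes the $h(x,y+1)$ term exactly $\frac{r}{r+1}h(x,y)$; a naive term‑by‑term pairing of the two Pochhammer products is too lossy and fails once the gap $g$ exceeds $r$, so the correct move is the shift‑and‑reindex comparison above. The truncation step is routine but worth spelling out, since the state space is infinite and one cannot invoke a maximum principle directly.
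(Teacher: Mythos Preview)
Your argument is correct, and it is a genuinely different route from the paper's. The paper proves the $x_0>y_0$ case path-wise: among all first-hit paths from $(x_0,y_0)$ to the diagonal it identifies an extremal path $\pi^*$ (the one that hugs the diagonal as closely as possible) that maximizes the CA path probability, bounds $\P_{\CA,r}[\pi^*]$ by a constant times the corresponding RW path probability, sums over paths and times, and finishes with the classical biased-walk hitting probability $\P_{\RW,r}^{(x_0,y_0)}[\tau_1<\infty]=r^{-(x_0-y_0)}$. Your proof is instead potential-theoretic: you verify directly that the claimed bound $h$ is superharmonic for the one-step operator on $\{x>y\}$ with boundary value $1$ on the diagonal, and then use the standard truncation $g_N\uparrow g$ to conclude $g\le h$. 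The key algebraic identity $\frac{y}{rx+y}\,h(x,y+1)=\frac{r}{r+1}\,h(x,y)$ and the shift comparison $(c+r+i)\ge(c+1+i)$ are both correct, and the resulting inequality $(c+g)^2\prod_{i=1}^{g-1}(c+i)\le\prod_{i=0}^{g}(c+r+i)$ indeed reduces to $c+g\le c+g+1$.

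What each approach buys: the paper's path-wise argument makes the comparison with RW explicit and in principle could be sharpened by a finer analysis of $\pi^*$, but it requires importing the RW first-passage probability from outside. Your supersolution argument is entirely self-contained and arguably cleaner, since once the Pochhammer bookkeeping is done there is nothing left to cite; it also explains structurally \emph{why} the bound has the form it does (it is the minimal superharmonic majorant of that shape). One minor difference: for $x_0\le y_0$ the paper actually shows $\P[\tau_1<\infty]=1$ via the a.s.\ convergence $Y_t/X_t\to 0$, whereas you only use the trivial bound $\le 1$; that is all the lemma claims, so your treatment suffices.
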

Note that $\P_{\CA,r}^{(x_0,y_0)}[\tau_1<\infty]$ is the exit probability $E(x,y)$ in \cite{Antal10} when $r=1$.

\begin{proof}
The case $x_0 = y_0$ is trivial since $\tau_1 = 0$. When $x_0 <  y_0$, Theorem 3.21 of \cite{janson2004functional} yields $Y_t/X_t \to 0$ almost surely, from which it follows that $X_t > Y_t$ eventually and hence $\P_{\CA,r}^{(x_0,y_0)}[\tau_1<\infty]=1$. 

Now assume $x_0 > y_0$. Recall that $A_{1,t}(x_0,y_0)$ is the set of paths starting from $(x_0,y_0)$ that end with the first tie at time $t$. Note that $A_{1,t}(x_0,y_0)$ is nonempty only if $t = d_0 + 2k$, where $d_0 = x_0-y_0$ and $k\geq 0$. Let $\pi\in A_{1,t}(x_0,y_0)$ and its state at time $j$ be $\pi_j = (x_j, y_j)$. The probability of the path $\pi$ is given by
\begin{align*}
\P_{\CA,r}^{(x_0,y_0)}[\pi] &= \prod_{j=0}^{t-1} \left(\frac{rx_j}{rx_j + y_j}\right)^{x_{j+1}-x_j}\left(\frac{y_j}{rx_j +  y_j}\right)^{ y_{j+1}- y_j} \\
&= \frac{r^{x_t - x_0} (x_0)_{x_t - x_0} (y_0)_{y_t - y_0}}{\prod_{j=0}^{t-1} (rx_j +y_j)} \\
&= \frac{r^{x_t - x_0} (x_0)_{x_t - x_0} (y_0)_{y_t - y_0}}{\prod_{j=0}^{t-1} [(r-1)x_j +x_0 + y_0 + j]},
\end{align*}
where in the last step we have used $x_j + y_j = x_0 + y_0 + j$. Note that $\P_{\CA,r}^{(x_0,y_0)}[\pi]$ is maximized if the $x_j$'s are minimized, subject to the constraints that the $x_j$'s increase monotonically from $x_0$ to $x_t$ with step size 0 or 1, and that $x_j > y_j$ for all $1\leq j\leq t-1$, or equivalently $x_j > x_0 + (j-d_0)/2$. This is achieved by the following sequence,
\[
x_j^* = \begin{cases}
 x_0, & j = 0,1,\dots, d_0-1;\\
 x_0 + \left\lfloor(j-d_0)/2\right\rfloor+1, & j = d_0, d_0 + 1, \dots, t - 1;\\
x_t, & j = t.
\end{cases}
\]
The corresponding path $\pi^*$ has probability
\begin{align*}
&\P_{\CA,r}^{(x_0,y_0)}[\pi^*] \\
=\;& \prod_{j=0}^{d_0-2} \frac{y_0+j}{rx_0 + y_0 + j} \left(\prod_{x = x_0}^{x_t-1} \frac{r x}{rx + (x-1)}\cdot\frac{x-1}{r(x+1)+(x-1)}\right)\\
&\times \frac{x_t - 1}{r x_t + (x_t-1)},
\end{align*}
which, after arrangement, yields,
\begin{align*}
&\P_{\CA,r}^{(x_0,y_0)}[\pi^*] \\
=\;& \prod_{j=0}^{d_0-1} \frac{y_0+j}{rx_0 + y_0 + j} \left(\prod_{x = x_0}^{x_t-1} \frac{r x}{(r+1)x+r}\cdot\frac{x}{(r+1)x + (r-1)}\right)\\
\leq\;& \frac{(y_0)_{d_0}}{(rx_0 + y_0)_{d_0}} \frac{r^{x_t-x_0}}{(r+1)^{2(x_t-x_0)}}\\
=\;& \frac{(y_0)_{d_0} (r+1)^{d_0}}{(rx_0 + y_0)_{d_0}} \left(\frac{r}{r+1}\right)^{x_t-x_0}\left(\frac{1}{r+1}\right)^{y_t-y_0}\\
=\;&\frac{(y_0)_{d_0} (r+1)^{d_0}}{(rx_0 + y_0)_{d_0}}  \P_{\RW,r}^{(x_0,y_0)}[\pi].
\end{align*}
Thus we have
\[
\P_{\CA,r}^{(x_0,y_0)}[\pi] \leq \P_{\CA,r}^{(x_0,y_0)}[\pi^*] \leq \frac{(y_0)_{d_0} (r+1)^{d_0}}{(rx_0 + y_0)_{d_0}}  \P_{\RW,r}^{(x_0,y_0)}[\pi],
\]
and, after summing over $\pi\in A_{1,t}(x_0,y_0)$,
\[
\P_{\CA,r}^{(x_0,y_0)}[\tau_1=t] \leq \frac{(y_0)_{d_0} (r+1)^{d_0}}{(rx_0 + y_0)_{d_0}}  \P_{\RW,r}^{(x_0,y_0)}[\tau_1=t].
\]
Summing over $t$, we obtain
\[
\P_{\CA,r}^{(x_0,y_0)}[\tau_1<\infty] \leq \frac{(y_0)_{d_0} (r+1)^{d_0}}{(rx_0 + y_0)_{d_0}}  \P_{\RW,r}^{(x_0,y_0)}[\tau_1<\infty].
\]
By Eq.~(3.9) and XI.3.d of \cite{fellerOne}, $\P_{\RW,r}^{(x_0,y_0)}[\tau_1<\infty] = r^{-d_0}$, from which the desired conclusion follows. 
\end{proof}

\begin{corollary}\label{cor:eventual_tie}
The probability of having at least one more tie starting from a tie state $(x,x)$ is bounded by
\[
\P_{\CA,r}^{(x,x)}[\tau_2<\infty] \leq \frac{2}{r+1}.
\]
\end{corollary}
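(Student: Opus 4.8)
The plan is to condition on the first step out of the tie state $(x,x)$ and then invoke \prettyref{lem:eventual_tie}. From $(x,x)$ the process moves to $(x+1,x)$ with probability $\frac{rx}{rx+x}=\frac{r}{r+1}$ and to $(x,x+1)$ with probability $\frac{x}{rx+x}=\frac{1}{r+1}$. Neither of these is a tie state, so the next tie of the chain started at $(x,x)$ is exactly the first tie of the chain restarted from the new state; by the Markov property,
\[
\P_{\CA,r}^{(x,x)}[\tau_2<\infty] = \tfrac{r}{r+1}\,\P_{\CA,r}^{(x+1,x)}[\tau_1<\infty] + \tfrac{1}{r+1}\,\P_{\CA,r}^{(x,x+1)}[\tau_1<\infty].
\]

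Next I would apply the two cases of \prettyref{lem:eventual_tie} to the two post-step states. For $(x,x+1)$ the first coordinate is the smaller one, so the lemma only gives the trivial bound $\P_{\CA,r}^{(x,x+1)}[\tau_1<\infty]\le 1$; used alone this would yield merely the bound $1$, so it is not the source of the improvement. The gain comes from $(x+1,x)$, where the first coordinate exceeds the second by $d_0=1$, so the nontrivial branch of \prettyref{lem:eventual_tie} applies and gives
\[
\P_{\CA,r}^{(x+1,x)}[\tau_1<\infty] \le \frac{(x)_1}{(r(x+1)+x)_1}\Bigl(1+\tfrac{1}{r}\Bigr) = \frac{x}{(r+1)x+r}\cdot\frac{r+1}{r}.
\]

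Substituting both bounds and simplifying, the $(x+1,x)$ contribution collapses to $\frac{r}{r+1}\cdot\frac{x}{(r+1)x+r}\cdot\frac{r+1}{r}=\frac{x}{(r+1)x+r}$, so the whole sum is at most $\frac{x}{(r+1)x+r}+\frac{1}{r+1}$. It then remains to verify $\frac{x}{(r+1)x+r}\le\frac{1}{r+1}$, which after clearing the positive denominators is equivalent to $(r+1)x\le(r+1)x+r$, i.e. $r\ge 0$, which holds. This gives $\P_{\CA,r}^{(x,x)}[\tau_2<\infty]\le\frac{2}{r+1}$, completing the argument. There is no genuine obstacle here; the only things to be careful about are keeping track of which coordinate is larger in each post-step state so that the correct case of \prettyref{lem:eventual_tie} is invoked, and observing that the whole bound is driven by the asymmetry in favor of the fitter agent once the step to $(x+1,x)$ has been taken.
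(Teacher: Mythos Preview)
Your proof is correct and follows essentially the same approach as the paper: condition on the first step out of $(x,x)$, apply \prettyref{lem:eventual_tie} to each of the two resulting states, and then observe that $\frac{x}{(r+1)x+r}\le\frac{1}{r+1}$. The paper presents the computation slightly more tersely, but the decomposition, the invocation of the lemma, and the final inequality are identical.
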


\begin{proof}
By considering the one-step transition from $(x,x)$ into $(x,x+1)$ or $(x+1,x)$, we obtain
\begin{align*}
&\P_{\CA,r}^{(x,x)}[\tau_2<\infty] \\
=\;& \frac{r}{r+1} \P_{\CA,r}^{(x+1,x)}[\tau_1<\infty] + \frac{1}{r+1} \P_{\CA,r}^{(x,x+1)}[\tau_1<\infty]\\
\leq\;& \frac{x}{(r+1)x+r} + \frac{1}{r+1} \leq \frac{2}{r+1},
\end{align*}
where the first inequality follows from Lemma \ref{lem:eventual_tie}.
\end{proof}

Now we prove \prettyref{thm:intensity-unequal}.

\begin{proof}[of \prettyref{thm:intensity-unequal}]
Let $Z_n$ be the common value of $X_t$ and $Y_t$ at $t=\tau_n$, i.e., $Z_n = X_{\tau_n}$. Conditioned on $\tau_n<\infty$ and $Z_n=z$, the probability of $\tau_{n+1}<\infty$ is just the probability of having a tie after leaving $(z, z)$. Thus
\begin{align*}
 \P_{\CA,r}^{(x_0,y_0)}[\tau_{n+1}<\infty \mid \tau_n < \infty, Z_n=z] 
=\P_{\CA,r}^{(z,z)}[\tau_2<\infty]\leq \frac{2}{r+1},
\end{align*}
by \prettyref{cor:eventual_tie}.
Removal of the conditioning yields
\[
\P_{\CA,r}^{(x_0,y_0)}[\tau_{n+1}<\infty \mid \tau_n < \infty] \leq \frac{2}{r+1}.
\]
It follows that
\begin{align*}
& \P_{\CA,r}^{(x_0,y_0)}[N\geq n] = \P_{\CA,r}^{(x_0,y_0)}[\tau_n <\infty] \\
=\;& \P_{\CA,r}^{(x_0,y_0)}[\tau_1 <\infty] \prod_{i=1}^{n-1}\P_{\CA,r}^{(x_0,y_0)}[\tau_{i+1}<\infty\mid \tau_{i}<\infty]\\
\leq\; & \P_{\CA,r}^{(x_0,y_0)}[\tau_1<\infty] \left(\frac{2}{r+1}\right)^{n-1}.
\end{align*}
An application of Lemma \ref{lem:eventual_tie} completes the proof.
\end{proof}


\section{Discussion and Conclusion} \label{sec:conclusion}

As supported by various empirical studies over the last century, real world competitions for resource 
accumulation seem to be subject to cumulative advantage effects, at least to some degree. Because of such findings 
and the fact that CA generally leads to large inequalities in resource distribution, understanding the 
role of skill and luck in competition dynamics becomes a pressing issue both in theory and practice. 
Indeed, recent empirical~\cite{Salganik2006,van2014field} and 
theoretical~\cite{denrell2012top,Antal10,krapivsky2002statistics} studies have contributed in this direction. 
However, contrary to prior theoretical works, we consider simple and classical mathematical models 
that capture just the essence of skill and luck competitions with and without CA effects, and investigate 
fundamental aspects of competition, namely duration (i.e., time until ultimate winner emerges) and intensity 
(i.e., number of ties in competition). By considering simple models and simple properties we prove 
and illustrate fundamental theoretical results: CA effect exacerbates the role of luck - power law 
tail duration emerges regardless of skill differences, and become extreme (i.e., infinite mean) when 
skill differences are small enough. Moreover, duration is long not necessarily because of intense competition 
where agents tussle aggressively for ultimate leadership. On the contrary, under CA, competitions 
are generally very mild, exhibiting an exponential tail. Long competitions 
emerge when an early stroke of luck places the less skilled in the lead, who can then, boosted by CA effects, 
 enjoy leadership for a very long period of time. Thus, when CA is present luck sides 
with the less skilled. 

The non-negligible probability of long-lasting competitions has far-reaching implications. 
In the absence of CA, it takes very little time for the fittest agent to establish dominance, 
so it is often reasonable to neglect the possibility of a premature burnout. Such observations 
are in hand with the ``survival of the fittest'' principle, since soon enough the more skilled 
will prevail. In the presence of CA, however, even agents with large fitness superiority may face 
the challenge of having to endure extremely long competitions. This challenge becomes all more 
real when the fitness superiority is only minimal. Will the more skilled 
survive the seemingly eternal inferiority during the competition? Under CA time becomes a 
central issue, with delusion becoming reality if the more skilled burns out during a long struggle. 
Thus, in the face of CA, the fittest survives only if it can persist, which prompts us to rename the principle 
``survival of the fittest and persistent'' when considering CA competitions.


%
\bibliographystyle{abbrv}
\bibliography{refs}
%



\balancecolumns

\end{document}